\newcommand{\Z}{\mathbb{Z}}
\begin{document}

\title{Projective Representations, Bogomolov Multiplier, and Their Applications in Physics}

\author{Ryohei Kobayashi}\email{ryok@ias.edu}
\affiliation{School of Natural Sciences, Institute for Advanced Study, Princeton, NJ 08540, USA}
\affiliation{Theoretical Sciences Visiting Program, Okinawa Institute of Science and Technology Graduate University, Onna, 904-0495, Japan}

\author{Haruki Watanabe}\email{hwatanabe@g.ecc.u-tokyo.ac.jp}
\affiliation{Department of Applied Physics, The University of Tokyo, Tokyo 113-8656, Japan}
\affiliation{Theoretical Sciences Visiting Program, Okinawa Institute of Science and Technology Graduate University, Onna, 904-0495, Japan}

\date{\today}

\begin{abstract}
We present a pedagogical review of projective representations of finite groups and their physical applications in quantum many-body systems. Some of our physical results are new.
We begin with a self-contained introduction to projective representations, highlighting the role of group cohomology, representation theory, and classification of irreducible projective representations.
We then focus on a special subset of cohomology classes, known as the Bogomolov multiplier, which consists of cocycles that are symmetric on commuting pairs but remain nontrivial in group cohomology. Such cocycles have important physical implications: they characterize (1+1)D SPT phases that cannot be detected by string order parameters
and give rise, upon gauging, to distinct gapped phases with completely broken non-invertible $\mathrm{Rep}(G)$ symmetry. We construct explicit lattice models for these phases and demonstrate how they are distinguished by the fusion rules of local order parameters. We show that a pair of completely broken $\mathrm{Rep}(G)$ SSB phases host nontrivial \textit{interface modes} at their domain walls. As an example, we construct a lattice model where the ground state degeneracy on a ring increases from 32 without interfaces to 56 with interfaces. 
\end{abstract}

\newtheorem{example}{Example}
\newtheorem{theorem}{Theorem}
\newtheorem{lemma}{Lemma}
\maketitle
\tableofcontents

\section{Introduction}
One of the most fundamental tools for describing symmetries in physics is the representation theory of groups.
In quantum mechanics, in particular, the transformation rules for quantum states often take the form of projective representations.
For instance, an electron possesses an internal spin-$1/2$ degree of freedom, corresponding to a two-dimensional representation of the $\mathrm{SU}(2)$ symmetry; however, its faithful symmetry action must be viewed as a projective representation of $\mathrm{SO}(3)$, which is realized, for example, through spatial rotation symmetry.
Formally, a projective representation is defined as a map $U\colon G\to U(d)$ satisfying
\begin{align}
U(g)U(g')=\alpha(g,g')U(gg'),
\end{align}
for all $g,g'\in G$, where $\alpha(g,g')\in U(1)$ is a phase factor. 

This mathematical structure is far from merely abstract; it plays an essential role in various branches of modern physics.
In one spatial dimension, for example, symmetry-protected topological (SPT) phases are classified by the projective representations of their internal symmetry group \cite{PhysRevB.83.035107,PhysRevB.85.075125}.
In crystalline solids, electronic states are represented by Bloch functions in momentum space, and the projective representations of the little group at each $\bm{k}$-point determine the degeneracies and topological properties of energy bands. These representations at different $\bm{k}$-points are further constrained by compatibility relations, which limit the possible combinations across the Brillouin zone~\cite{PhysRevLett.117.096404,PhysRevX.7.041069}.
This information can be systematically incorporated into the frameworks of symmetry indicators~\cite{SI} and topological quantum chemistry~\cite{TQC}--- formalisms widely employed in high-throughput searches for topological insulators and semimetals.
Moreover, the projective representation underlies modern, discrete‐symmetry versions of the Lieb-Schultz-Mattis theorem. If the on-site degrees of freedom within a single unit cell furnish a non-trivial projective representation of the global symmetry group --- so that no one-dimensional linear representation exists in the corresponding cohomology class --- then a gapped, symmetry-preserving, and non-degenerate ground state is impossible. The system must either exhibit ground-state degeneracy or support gapless excitations.
This extends the original argument for half-odd-integer spin chains.
~\cite{PhysRevLett.119.127202,OgataTasaki,PhysRevX.8.011040, Cheng2016translational, Cheng2023LSM}.

Despite their importance, detailed and accessible expositions of the mathematics underlying projective representations remain limited.
Classical references on group theory in physics, such as Refs.~\onlinecite{Tinkham,bradley2009mathematical}, typically discuss this topic only briefly, while mathematically oriented texts e.g.~Refs.~\onlinecite{Curtis,Haggarty,Karpilovsky} may present a high barrier for many physics-trained readers.
The purpose of this note is thus to provide an elementary and self-contained introduction to projective representations of \emph{finite} groups, beginning from fundamental definitions.

Throughout this note, we assume $G$ to be a finite group and set $U(1)\coloneqq \{ z\in\mathbb{C}\mid|z|=1 \}$.
As a recurring example, we frequently refer to the simplest nontrivial case,
\begin{align}
G=\mathbb{Z}_2\times\mathbb{Z}_2,
\end{align}
which is isomorphic to the group generated by $\pi$-rotations about the $x$, $y$, and $z$ axes, and thus commonly emerges as an actual symmetry in physics.
For this group, there exist four distinct one-dimensional ordinary (linear) irreducible representations, while there is only a single two-dimensional irreducible projective representation.
In fact, when the group $G$ is \emph{Abelian}, the number of inequivalent irreducible projective representations is always smaller than the number of ordinary (linear) irreducible representations. 

The situation significantly changes for a \emph{non‑Abelian} group.  There it is possible that the number of linear irreducible representations matches the number of the projective ones. This happens when the projective representation corresponds to a $2$‑cocycle $\alpha$ that is non‑trivial in cohomology yet symmetric on every commuting pair, i.e., $\alpha(g,g')=\alpha(g',g)$ for all $g,g'\in G$ with $gg'=g'g$.
Such projective representations correspond to elements of the Bogomolov multiplier $B(G)$, which is a specific subgroup of group cohomology~\cite{Bogomolov,Moravec,Davydov2014}.  

The second aim of this note is to study the role of the Bogomolov multiplier in topological phases of many-body systems. Some of the results presented in this note are known, while some are new.
We first review the properties of symmetry-protected topological (SPT) phases in (1+1)D characterized by a cocycle in the Bogomolov multiplier. Specifically, the edge states of the (1+1)D SPT phase transform under the projective representation characterized by a nontrivial Bogomolov multiplier. In particular, while a large class of (1+1)D SPT phases are characterized by non-local order parameters known as string order parameters, we will see that the SPT phases associated with the Bogomolov multiplier cannot be detected by string order parameters. 
This observation was first made in Ref.~\onlinecite{Pollmann2012detection} using an example of a symmetry group with nontrivial Bogomolov multiplier. 

We then study the implications of Bogomolov multipliers in (1+1)D gapped phases with non-invertible $\text{Rep}(G)$ global symmetry, where $G$ is a non-Abelian finite group with nontrivial Bogomolov multiplier $B(G)$ (see e.g., \cite{McGreevy:2022oyu,schafernameki2023ictp,shao2024tasi,brennan2023introduction,bhardwaj2023lectures} for recent reviews on non-invertible symmetries). In particular, it was recently pointed out in \cite{kobayashi2025soft} that the Bogomolov multiplier gives rise to \textit{distinct} $\text{Rep}(G)$ gapped phases in (1+1)D where $\text{Rep}(G)$ symmetry is spontaneously broken completely. We explicitly provide a pair of lattice models for these $\text{Rep}(G)$ broken phases. These models are  simply obtained by gauging the $G$ symmetry of the $G$ SPT phases associated with a nontrivial Bogomolov multiplier.

We investigate how to distinguish and characterize the two phases in (1+1)D with fully broken $\text{Rep}(G)$ symmetry. It turns out that there are two ways to distinguish them:
\begin{itemize}
    \item While these phases have the same number of local order parameters $O(C)$ labeled by a conjugacy class of a symmetry group $C\in C(G)$, the fusion rules of the local order parameters differ. That is, these two phases are distinguished by phases of fusion coefficients $N^{C''}_{C,C'}$ that appear in the fusion rules
    \begin{align}
    O(C)\times O(C')= \sum_{C''}N^{C''}_{C,C'}O(C'')~.
    \end{align}
    \item Surprisingly, these two distinct SSB phases with fully broken symmetries host nontrivial gapped \textit{interface modes} at the domain wall between two phases. Specifically, consider a system defined on a closed ring, with one SSB phase supported on an interval and the other on its complement. This configuration gives rise to a pair of interfaces. The presence of these interfaces leads to additional ground state degeneracy arising from the interface modes, as well as symmetry actions that differ from those in systems without interfaces. 
    For example, we construct a lattice model realizing a pair of $\text{Rep}(G)$-broken phases and their interfaces, for a specific choice of group $G$ with nontrivial Bogomolov multiplier $B(G)$. In this model, the ground state degeneracy on a ring increases from 32 in the absence of interfaces to 56 when interfaces are present. 
    \end{itemize}

We then review the role of Bogomolov multiplier in (2+1)D topological order described by $G$ gauge theory. In particular, a nontrivial element of Bogomolov multiplier $B(G)$ leads to a global symmetry of (2+1)D $G$ gauge theory with exotic properties 
dubbed a \textit{soft symmetry}, which neither permutes anyons, nor is associated with any symmetry fractionalization~\cite{kobayashi2025soft}. Nevertheless, the soft symmetry is a faithful symmetry that acts on the ground state Hilbert space of topological order on a higher genus surface, and acts on fusion vertices of anyons.  

This (2+1)D perspective gives an additional insight to the Rep$(G)$ broken phases through the framework of symmetry TQFT~\cite{Ji:2019jhk,Kaidi:2022cpf,Freed:2022qnc,Thorngren:2019iar,Lichtman2021,
Gukov:2020btk,Kong2020algebraic,Aasen:2016dop,Chatterjee2023shadow,Moradi2023holography,kaidi2023symmetrytftanomalies,bhardwaj2023charges,
Apruzzi2023symTFT} --- a method that describes (1+1)D gapped phases in terms of intervals of (2+1)D topological order bounded by gapped boundary conditions. 
This provides a convenient framework for classifying phases of matter with a given non-invertible symmetry~\cite{Thorngren:2019iar, Kong2020algebraic, chatterjee2023holographic, Bhardwaj2025clubsandwich, bhardwaj2024hasse,bhardwaj2024gappedphases21dnoninvertible, Antinucci2025gaplessSPT, bottini2025haagerup, bhardwaj2025gappedphases21dnoninvertible, bhardwaj2025gaplessphases21dnoninvertible}.
In particular, a pair of fully symmetry-broken phases correspond to a pair of gapped boundaries of a (2+1)D $G$ gauge theory with identical sets of condensed particles, made possible by soft symmetry in $G$ gauge theory. Using symmetry TQFT, we demonstrate that (1+1)D gapped phases with maximally broken $\text{Rep}(G)$ symmetry are classified by the Bogomolov multiplier $B(G)$.
We further compute the ground state degeneracy of these $\text{Rep}(G)$-broken phases with interfaces within the symmetry TQFT framework, and confirm consistency with the degeneracy of interface modes observed in the lattice models.

This note is organized as follows. In Sec.~\ref{sec:projective}, we review the basics of projective representations.
In Sec.~\ref{sec:Bogomolov}, we discuss several examples of Bogomolov multipliers. In Sec.~\ref{sec:physics} we discuss physical applications of Bogomolov multipliers in (1+1)D and (2+1)D gapped phases. Readers primarily interested in the physical results may begin with Sec.~\ref{sec:physics}.

\section{Projective Representations}
\label{sec:projective}
In this section, we summarize the definitions and basic results on projective representations.
\subsection{Properties of 2-Cocycles}
\subsubsection{Group cohomology}\label{sec:cohomology}
A map $\alpha\colon G\times G\to U(1)$ is called a \textbf{2-cocycle} if it satisfies the \textbf{cocycle condition}
\begin{align}
  \alpha(g,g') \alpha(gg',g'')
  =\alpha(g,g'g'') \alpha(g',g'')
  \label{cocycle}
\end{align}
for all $g,g',g''\in G$.  
The set of all such maps is denoted $Z^{2}(G,U(1))$.  
The trivial 2-cocycle $\alpha^{(0)}$ is defined as $\alpha^{(0)}(g,g')\equiv1$ for all $g,g'\in G$.

A map $\beta\colon G\to U(1)$ is called a \textbf{1-cochain}. Using a 1-cochain $\beta$, we define a map $\delta\beta\colon G\times G\to U(1)$ by
\begin{align}
\delta\beta(g,g')\coloneqq \beta(g)\beta(g')/\beta(gg').
\end{align}
This map $\delta\beta$ is called a \textbf{2-coboundary}; it also satisfies the cocycle condition \eqref{cocycle} and thus belongs to $Z^2(G,U(1))$.
The set of all 2-coboundaries is denoted by $B^2(G,U(1))$.

Two $2$-cocycles $\alpha,\alpha'$ are said to be \textbf{equivalent}, written $\alpha'\sim\alpha$, if there exists a 1-cochain $\beta$ such that
\begin{align}
  \alpha'(g,g')=\delta\beta(g,g') \alpha(g,g').
  \label{equiv}
\end{align}
The equivalence class containing $\alpha$ is denoted by $[\alpha]$.
The set of equivalence classes of 2-cocycles under this relation forms an abelian group called the \textbf{second group cohomology} of $G$ with coefficients in $U(1)$, defined by
\begin{align}
H^2(G,U(1)) \coloneqq Z^2(G,U(1))/B^2(G,U(1)),
\end{align}
which always forms an abelian group.

\begin{example}\label{ex:z2z2}
For $G=\mathbb{Z}_2\times\mathbb{Z}_2$, group elements can be represented as $g=(n,m)$ with $n,m=0,1$, and the group multiplication is defined by $(n,m)(n',m')=(n+n',m+m')\mod 2$.
In this case, we have $H^2(G,U(1))\cong\mathbb{Z}_2$, and a representative nontrivial 2-cocycle not equivalent to $\alpha^{(0)}$ is given by
\begin{align}
\alpha(g,g')\coloneqq(-1)^{nm'}. \label{egomega}
\end{align}
\end{example}

\begin{lemma}
\label{omegasim}
For an element $x\in G$, define $\alpha'(g,g')\coloneqq\alpha(xgx^{-1},xg'x^{-1})$. Then, $\alpha'\sim\alpha$.
\end{lemma}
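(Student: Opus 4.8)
The plan is to construct an explicit 1-cochain $\beta$ depending on $x$ that witnesses the equivalence $\alpha'\sim\alpha$. The natural guess, motivated by the fact that conjugation is an inner automorphism and inner automorphisms act trivially on group cohomology, is to look for $\beta$ built out of the values of $\alpha$ evaluated on $x$ and the relevant group elements. First I would write down the cocycle condition \eqref{cocycle} for several well-chosen triples involving $x$, $x^{-1}$, $g$, and $g'$ — in particular the triples $(x, g, x^{-1})$, $(x g x^{-1}, x g' x^{-1}, \text{something})$, and triples that let me rewrite $\alpha(xgg'x^{-1}, \cdot)$ in terms of $\alpha(xgx^{-1}, xg'x^{-1})$ — and then solve for a product/quotient of $\alpha$-values that telescopes into the form $\delta\beta(g,g')\,\alpha(g,g')$.

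Concretely, I expect the right cochain to be (up to inversion conventions) $\beta(g) \coloneqq \alpha(x g x^{-1}, x)\,/\,\alpha(x, g)$, or a close variant; the idea being that $\beta(g)\beta(g')/\beta(gg')$ should exactly absorb the "difference" between $\alpha(xgx^{-1}, xg'x^{-1})$ and $\alpha(g,g')$. The key computational step is then to substitute this $\beta$ into $\delta\beta(g,g') = \beta(g)\beta(g')/\beta(gg')$ and verify, using the cocycle identity \eqref{cocycle} repeatedly, that $\delta\beta(g,g')\,\alpha(g,g') = \alpha(xgx^{-1}, xg'x^{-1}) = \alpha'(g,g')$. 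This is a finite chain of applications of \eqref{cocycle}: each application lets me move a factor of $x$ or $x^{-1}$ past a group element inside an argument of $\alpha$, and the coboundary terms are precisely designed to cancel the accumulated phases.

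The main obstacle is bookkeeping: getting the placement of $x$ versus $x^{-1}$ and the direction of the quotients exactly right so that everything cancels, since a sign or inversion error anywhere produces a residual phase. To control this I would organize the computation by first establishing the auxiliary identity $\alpha(x,x^{-1}) = \alpha(x^{-1},x)$ (a one-line consequence of \eqref{cocycle} with the triple $(x,x^{-1},x)$), and by using the normalization that can always be assumed after a coboundary shift, namely $\alpha(e,g)=\alpha(g,e)=1$; these reduce the number of stray boundary terms. With the normalization in hand, the triples $(x,g,x^{-1})$ and $(xgx^{-1},x,x^{-1})$ and $(x, gg', x^{-1})$ essentially dictate the formula for $\beta$, and the verification becomes mechanical. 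I would present the final identity as a single displayed equation tracking each use of \eqref{cocycle}, rather than narrating each step, since the content is entirely in the cancellation pattern.
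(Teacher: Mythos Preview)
Your proposal is correct and matches the paper's proof almost exactly: the paper defines precisely $\beta(g)\coloneqq \alpha(xgx^{-1},x)/\alpha(x,g)$ and then verifies $\delta\beta(g,g')\,\alpha(g,g')=\alpha'(g,g')$ by a short chain of applications of the cocycle identity~\eqref{cocycle}. The auxiliary facts you mention (normalization and $\alpha(x,x^{-1})=\alpha(x^{-1},x)$) are not actually needed---the paper's computation goes through for an arbitrary 2-cocycle without them---so you can streamline by omitting those steps.
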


\begin{proof}
Define a 1-cochain by $\beta(g)\coloneqq \alpha(xgx^{-1},x)/\alpha(x,g)$. Using the cocycle condition \eqref{cocycle}, we obtain
\begin{align}
\delta\beta(g,g')\alpha(g,g')&= \frac{\beta(g)\beta(g')}{\beta(gg')}\alpha(g,g')\notag\\
&= \frac{\alpha(xgx^{-1},x)\alpha(xg'x^{-1},x)[\alpha(x,gg')\alpha(g,g')]} {\alpha(x,g)\alpha(x,g')\alpha(xgg'x^{-1},x)}\notag\\
&= \frac{[\alpha(xgx^{-1},x)\alpha(xg,g')]\alpha(xg'x^{-1},x)} {\alpha(x,g')\alpha(xgg'x^{-1},x)}\notag\\
&= \frac{[\alpha(xgx^{-1},xg')\alpha(xg'x^{-1},x)]} {\alpha(xgg'x^{-1},x)}\notag\\
&=\alpha(xgx^{-1},xg'x^{-1})\notag\\
&=\alpha'(g,g'),
\end{align}
thus verifying \eqref{equiv}.
\end{proof}

\subsubsection{$\alpha$-regularity of conjugacy classes}
We now review some basic facts about conjugacy classes. 
The orbit of an element $g\in G$ under conjugation $g\mapsto xgx^{-1}$ is called the \textbf{conjugacy class} of $g$, denoted by
\begin{align}
C(g)\coloneqq \{xgx^{-1}\mid x \in G\}.
\end{align}
The set of all conjugacy classes of $G$
is denoted by
\begin{align}
    C(G):=\{C(g)\mid g\in G\}. 
\end{align}
The set of elements commuting with $g$ is called the \textbf{centralizer group} of $g$ and denoted by 
\begin{align}
Z(g)\coloneqq \{h \in G\mid hg=gh\}.
\end{align}
The set of elements commuting with all elements of $G$ is called the \textbf{center} of $G$ and denoted by 
\begin{align}
Z(G)\coloneqq \{h \in G\mid hg=gh,\;\forall g\}.
\end{align}
For a subgroup $H$ of $G$, the set of all left cosets $gH$ is denoted by $G/H$. 
\begin{screen}
\begin{theorem}\label{phionetoone}
The map
\begin{align}
  \phi\colon G/Z(g)\longrightarrow C(g),\qquad
  \phi(xZ(g))\coloneqq xgx^{-1},
\end{align}
is a bijection.  
Consequently, the following relation holds among the orders of these groups:
\begin{align}
  |Z(g)| |C(g)|=|G|.
  \label{ZCG}
\end{align}
\end{theorem}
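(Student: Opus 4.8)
The plan is to prove that $\phi$ is a well-defined bijection and then read off the order formula by counting cosets.

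First I would check well-definedness: if $xZ(g)=yZ(g)$, then $y^{-1}x=h$ for some $h\in Z(g)$, so writing $x=yh$ with $hg=gh$ gives
\begin{align}
xgx^{-1}=y(hgh^{-1})y^{-1}=ygy^{-1},
\end{align}
so the value assigned by $\phi$ is independent of the chosen coset representative. Next, injectivity is essentially the converse of this computation: if $xgx^{-1}=ygy^{-1}$, conjugating on the left by $y^{-1}$ and on the right by $y$ yields $(y^{-1}x)g(y^{-1}x)^{-1}=g$, i.e.\ $y^{-1}x\in Z(g)$, hence $xZ(g)=yZ(g)$. Surjectivity is immediate from the definition: every element of $C(g)$ has the form $xgx^{-1}=\phi(xZ(g))$ for some $x\in G$. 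Thus $\phi$ is a bijection and in particular $|G/Z(g)|=|C(g)|$.

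Finally, since $Z(g)$ is a subgroup of $G$, the left cosets $xZ(g)$ partition $G$ into $|G/Z(g)|$ disjoint subsets, each in bijection with $Z(g)$ and hence of size $|Z(g)|$ (Lagrange's theorem). Therefore $|G|=|Z(g)|\,|G/Z(g)|$, and substituting $|G/Z(g)|=|C(g)|$ gives \eqref{ZCG}.

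I do not anticipate any genuine obstacle here: this is the orbit-stabilizer theorem specialized to the conjugation action of $G$ on itself, with $Z(g)$ playing the role of the stabilizer of $g$ and $C(g)$ its orbit. The only step that requires a moment's care is the well-definedness check, where one must use precisely that the coset ambiguity $y^{-1}x$ lies in $Z(g)$, i.e.\ commutes with $g$, so that it can be absorbed harmlessly in the conjugation; injectivity then just runs this observation in reverse.
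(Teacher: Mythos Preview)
Your proof is correct and matches the paper's argument essentially line for line: well-definedness from $y^{-1}x\in Z(g)$, injectivity by running that computation backwards, and surjectivity by definition of $C(g)$. The only difference is that you spell out the Lagrange-theorem step to pass from the bijection to the order formula, whereas the paper leaves this implicit.
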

\end{screen}
\begin{proof}
If $xZ(g)=yZ(g)$ then $y^{-1}x\in Z(g)$, so $y^{-1}xg=gy^{-1}x$ and hence $xgx^{-1}=ygy^{-1}$.
Thus the value of $\phi(xZ(g))$ does not depend on the representative of the coset. First assume $\phi(xZ(g))=\phi(yZ(g))$, i.e. $xgx^{-1}=ygy^{-1}$.  
Rewriting gives $y^{-1}xg=gy^{-1}x$, so $y^{-1}x\in Z(g)$ and therefore $xZ(g)=yZ(g)$. Hence $\phi$ is injective. Conversely, for any $h\in C(g)$ there exists $x\in G$ with $h=xgx^{-1}$ by definition of the conjugacy class, and then $\phi(xZ(g))=h$. Hence $\phi$ is surjective.
\end{proof}

We now introduce the notion of $\alpha$-regularity, a key concept in  classifying projective representations. An element $g\in G$ is called \textbf{$\alpha$-regular} if  it satisfies
\begin{align}
\alpha(g,h)=\alpha(h,g)
\end{align}
for all $h\in Z(g)$.
Clearly, $\alpha$-regularity is invariant under conjugation; that is,
\begin{lemma}
If $g$ is $\alpha$-regular, then every conjugate $xgx^{-1}\in C(g)$ is also $\alpha$-regular.
\end{lemma}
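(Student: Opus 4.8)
The plan is to show that $\alpha$-regularity is preserved under conjugation by reducing the statement for $xgx^{-1}$ to the statement for $g$, using two facts: the relation between the centralizers $Z(xgx^{-1})$ and $Z(g)$, and Lemma~\ref{omegasim}. First I would observe that $h \in Z(xgx^{-1})$ if and only if $x^{-1}hx \in Z(g)$; indeed $h(xgx^{-1}) = (xgx^{-1})h$ is equivalent to $(x^{-1}hx)g = g(x^{-1}hx)$. So every element of $Z(xgx^{-1})$ can be written as $xhx^{-1}$ with $h \in Z(g)$, and conversely.

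Next I would compute $\alpha(xgx^{-1}, xhx^{-1})$ for $h \in Z(g)$ and compare it with $\alpha(xhx^{-1}, xgx^{-1})$. By Lemma~\ref{omegasim}, the cocycle $\alpha'(a,b) \coloneqq \alpha(xax^{-1}, xbx^{-1})$ is equivalent to $\alpha$, so $\alpha'(a,b) = \delta\beta(a,b)\,\alpha(a,b)$ for some explicit 1-cochain $\beta$. The key point is that $\delta\beta$ is \emph{symmetric on commuting pairs}: for $a,b$ with $ab = ba$ we have $\delta\beta(a,b) = \beta(a)\beta(b)/\beta(ab) = \beta(b)\beta(a)/\beta(ba) = \delta\beta(b,a)$. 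Therefore, if $g$ is $\alpha$-regular and $h \in Z(g)$ (so $g,h$ commute and $\alpha(g,h) = \alpha(h,g)$), then
\begin{align}
\alpha(xgx^{-1}, xhx^{-1}) &= \alpha'(g,h) = \delta\beta(g,h)\,\alpha(g,h) \notag\\
&= \delta\beta(h,g)\,\alpha(h,g) = \alpha'(h,g) = \alpha(xhx^{-1}, xgx^{-1}).
\end{align}

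Finally, combining the two steps: an arbitrary element of $Z(xgx^{-1})$ has the form $xhx^{-1}$ with $h \in Z(g)$, and the computation above shows $\alpha(xgx^{-1}, xhx^{-1}) = \alpha(xhx^{-1}, xgx^{-1})$ for all such elements, which is exactly the statement that $xgx^{-1}$ is $\alpha$-regular. I do not anticipate a serious obstacle here; the only thing to be careful about is the direction of the centralizer correspondence (that conjugation by $x$ maps $Z(g)$ onto $Z(xgx^{-1})$, not merely into it), so that we genuinely range over all of $Z(xgx^{-1})$. Everything else is the symmetry-on-commuting-pairs observation for coboundaries plus Lemma~\ref{omegasim}.
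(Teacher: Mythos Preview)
Your proof is correct and follows essentially the same approach as the paper: both arguments observe that elements of $Z(xgx^{-1})$ are of the form $xhx^{-1}$ with $h\in Z(g)$, invoke Lemma~\ref{omegasim} to write $\alpha(xgx^{-1},xhx^{-1})=\delta\beta(g,h)\,\alpha(g,h)$, and then note that the right-hand side is symmetric under $g\leftrightarrow h$ because $gh=hg$ and $g$ is $\alpha$-regular. Your write-up is slightly more explicit about the bijection between centralizers and about why $\delta\beta$ is symmetric on commuting pairs, but the underlying argument is the same.
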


\begin{proof}
An element of $Z(xgx^{-1})$ can be written $xhx^{-1}$ with $h \in Z(g)$.  
By Lemma \ref{omegasim},
\begin{align}
  \alpha(xgx^{-1},xhx^{-1})
  =\frac{\beta(g)\beta(h)}{\beta(gh)} \alpha(g,h).
\end{align}
Since $g$ is $\alpha$-regular, the right-hand side is symmetric under $g\leftrightarrow h$, and the claim follows.
\end{proof}

We denote by $C^{(\alpha)}(G)$ the set of $\alpha$-regular conjugacy classes of $G$, and write $|C^{(\alpha)}(G)|$ for the number of such classes.

\begin{lemma}\label{Zgrouphomo}
For $g\in G$, define a map $\varphi_g\colon Z(g)\to U(1)$ by
\begin{align}
\varphi_g(h)\coloneqq \frac{\alpha(g,h)}{\alpha(h,g)}.
\label{eq:slant}
\end{align}
Then $\varphi_g$ is a group homomorphism.
\end{lemma}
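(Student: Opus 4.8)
The plan is to establish the defining homomorphism identity $\varphi_g(hh')=\varphi_g(h)\,\varphi_g(h')$ for all $h,h'\in Z(g)$ by a direct computation using only the cocycle condition \eqref{cocycle} together with the fact that $h$ and $h'$ each commute with $g$. No normalization or equivalence of $\alpha$ is required, and the claim that $\varphi_g$ takes values in $U(1)$ is immediate since $\alpha$ is $U(1)$-valued; note also that $Z(g)$ is a subgroup, so $hh'\in Z(g)$ and $\varphi_g(hh')$ is indeed defined.

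First I would apply the cocycle condition to the triple $(g,h,h')$ to write $\alpha(g,hh')=\alpha(g,h)\,\alpha(gh,h')/\alpha(h,h')$. Next, using $gh=hg$, I rewrite $\alpha(gh,h')=\alpha(hg,h')$ and apply the cocycle condition to $(h,g,h')$ to replace this by $\alpha(h,gh')\,\alpha(g,h')/\alpha(h,g)$. Finally, using $gh'=h'g$, I rewrite $\alpha(h,gh')=\alpha(h,h'g)$ and apply the cocycle condition a third time to $(h,h',g)$, which produces $\alpha(hh',g)$ together with a factor $\alpha(h,h')$ that cancels the one generated in the first step. Collecting everything gives
\begin{align}
\alpha(g,hh')=\frac{\alpha(g,h)\,\alpha(g,h')}{\alpha(h,g)\,\alpha(h',g)}\,\alpha(hh',g),
\end{align}
and dividing both sides by $\alpha(hh',g)$ yields exactly $\varphi_g(hh')=\varphi_g(h)\,\varphi_g(h')$.

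I do not expect any genuine obstacle: the argument is pure bookkeeping with the $2$-cocycle identity. The only point requiring a little care is to choose the three instances of \eqref{cocycle} so that the extraneous factor $\alpha(h,h')$ introduced when splitting $\alpha(g,hh')$ is precisely the factor removed at the last step; other orderings of the three applications work as well, possibly routing through $\alpha(h',h)$ instead, with no change to the conclusion. As a consistency check one can also observe $\varphi_g(e)=\alpha(g,e)/\alpha(e,g)=1$, since the cocycle condition forces $\alpha(g,e)=\alpha(e,e)=\alpha(e,g)$.
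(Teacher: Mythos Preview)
Your proof is correct and follows exactly the approach the paper indicates: the paper's proof simply states that the cocycle condition immediately yields $\varphi_g(hh')=\varphi_g(h)\varphi_g(h')$, and your three applications of \eqref{cocycle} to $(g,h,h')$, $(h,g,h')$, and $(h,h',g)$ are precisely the computation that justifies that claim.
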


\begin{proof}
Cocycle condition \eqref{cocycle} immediately yields $\varphi_g(hh')=\varphi_g(h)\varphi_g(h')$ for $h,h' \in Z(g)$. 
\end{proof}

\begin{screen}
\begin{theorem}[Number of $\alpha$-regular conjugacy classes]\label{thm:regclasscount}
The number of $\alpha$-regular conjugacy classes $|C^{(\alpha)}(G)|$ is given by
\begin{align}
  |C^{(\alpha)}(G)|
  =\frac{1}{|G|}\sum_{g,g'\in G}
    \frac{\alpha(g,g')}{\alpha(g',g)} 
    \delta_{gg', g'g}.
  \label{regularClGcount}
\end{align}
Here, $\delta_{x,y}=1$ if $x=y$ and $0$ otherwise.  
For the trivial cocycle $\alpha^{(0)}$, this reduces to the total number of conjugacy classes
\begin{align}
  |C(G)|=\frac{1}{|G|}
  \sum_{g,g'\in G}\delta_{gg', g'g}.
  \label{ClGcount}
\end{align}
\end{theorem}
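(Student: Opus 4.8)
The plan is to collapse the double sum in \eqref{regularClGcount} by fixing $g$ and observing that the factor $\delta_{gg',g'g}$ restricts the inner sum over $g'$ to the centralizer $Z(g)$. For fixed $g$ the inner sum is then exactly $\sum_{g'\in Z(g)}\varphi_g(g')$, where $\varphi_g\colon Z(g)\to U(1)$ is the group homomorphism of Lemma~\ref{Zgrouphomo}. So the right-hand side of \eqref{regularClGcount} equals $\frac{1}{|G|}\sum_{g\in G}\sum_{g'\in Z(g)}\varphi_g(g')$.

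The key input is the elementary orthogonality fact for $U(1)$-valued characters of a finite group $H$: $\sum_{h\in H}\varphi(h)=|H|$ if $\varphi$ is the trivial homomorphism and $\sum_{h\in H}\varphi(h)=0$ otherwise. I would prove this by the standard averaging trick — if $\varphi(h_0)\neq1$ for some $h_0$, then multiplying the sum by $\varphi(h_0)$ and reindexing via the bijection $h\mapsto h_0h$ gives $\varphi(h_0)\sum_h\varphi(h)=\sum_h\varphi(h)$, forcing the sum to vanish. Applying this with $H=Z(g)$ and $\varphi=\varphi_g$, and noting that $\varphi_g$ is trivial precisely when $\alpha(g,h)=\alpha(h,g)$ for all $h\in Z(g)$, i.e. precisely when $g$ is $\alpha$-regular, the double sum reduces to $\frac{1}{|G|}\sum_{g\ \alpha\text{-regular}}|Z(g)|$.

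Finally I would organize the surviving sum by conjugacy classes. By the lemma that $\alpha$-regularity is invariant under conjugation, the set of $\alpha$-regular elements is a disjoint union of $\alpha$-regular conjugacy classes. For a fixed $\alpha$-regular class $C$ and any $g\in C$, Theorem~\ref{phionetoone} gives $|Z(g)|=|G|/|C|$, hence $\sum_{g\in C}|Z(g)|=|C|\cdot(|G|/|C|)=|G|$. Summing over all $|C^{(\alpha)}(G)|$ such classes yields $\frac{1}{|G|}\cdot|C^{(\alpha)}(G)|\cdot|G|=|C^{(\alpha)}(G)|$, which is \eqref{regularClGcount}. The specialization to $\alpha^{(0)}$ is immediate: there $\varphi_g\equiv1$ for every $g$, so every conjugacy class is $\alpha^{(0)}$-regular and the formula becomes \eqref{ClGcount}.

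There is no genuinely hard step here; the argument is routine once the character-orthogonality fact is available. The only points demanding a bit of care are (i) justifying that $\delta_{gg',g'g}=1$ forces $g'\in Z(g)$, so that restricting the inner sum is legitimate, and (ii) invoking Lemma~\ref{Zgrouphomo} to guarantee that $\varphi_g$ is an honest homomorphism $Z(g)\to U(1)$ — without which the orthogonality argument would not apply. The identification of "$\varphi_g$ trivial" with "$g$ is $\alpha$-regular" is then a direct unwinding of definitions.
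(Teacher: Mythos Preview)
Your proposal is correct and follows essentially the same approach as the paper's own proof: restrict the inner sum to $Z(g)$, invoke Lemma~\ref{Zgrouphomo} so that $\varphi_g$ is a genuine character of $Z(g)$, use character orthogonality to get $|Z(g)|$ or $0$ according to whether $g$ is $\alpha$-regular, and then collapse the remaining sum via the orbit--stabilizer relation \eqref{ZCG}. The paper's version is terser (it simply cites ``character orthogonality'' rather than spelling out the averaging trick), but the logic is identical.
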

\end{screen}

\begin{proof}
By Lemma \ref{Zgrouphomo} the map $\varphi_g$ gives a one-dimensional representation of $Z(g)$.  
If $g$ is $\alpha$-regular, $\varphi_g$ is trivial and
\begin{align}
  \sum_{h\in Z(g)}\varphi_g(h)=|Z(g)|.
\end{align}
If $g$ is \emph{not} $\alpha$-regular, character orthogonality implies the sum vanishes.  
Hence
\begin{align}
  \sum_{h\in Z(g)}
  \frac{\alpha(g,h)}{\alpha(h,g)}
  =
  \begin{cases}
    |Z(g)| & (\text{If $g$ is $\alpha$-regular})\\
    0      & (\text{otherwise}).
  \end{cases}
  \label{regularClG}
\end{align}
Using $|Z(g)|=|Z(g_0)|=|G|/|C(g_0)|$ for $g\in C(g_0)$ from \eqref{ZCG} and summing over each conjugacy class yields Eq.~\eqref{regularClGcount}.
\end{proof}

It follows immediately from Eq.~\eqref{regularClGcount} that $|C^{(\alpha)}(G)|=|C(G)|$ iff \emph{all} elements of $G$ are $\alpha$-regular.

\begin{example}
For the Abelian group $G=\mathbb{Z}_2\times\mathbb{Z}_2$, one has $Z(g)=G$ and $C(g)=\{g\}$.  
With the cocycle \eqref{egomega},
\begin{align}
|C^{(\alpha)}(G)|=\frac14\sum_{n,n',m,m'=0}^{1}(-1)^{nm'-n'm}=1,
\end{align}
so the only $\alpha$-regular conjugacy class is $C(e)=\{e\}$.
\end{example}


\subsection{Projective Representations}
\subsubsection{Multipliers and Projective Representations}
From the cocycle condition, 
we see that for all $g\in G$, a general 2-cocycle $\tilde{\alpha}\colon G\times G\to U(1)$ satisfies $\tilde{\alpha}(g,e)=\tilde{\alpha}(e,g)=\tilde{\alpha}(e,e)$.  Thus, by defining a constant 1-cochain $\beta(g)\equiv 1/\tilde{\alpha}(e,e)$, the adjusted cocycle \begin{align} \alpha(g,g')\coloneqq\delta\beta(g,g')\tilde{\alpha}(g,g')=\frac{\tilde{\alpha}(g,g')}{\tilde{\alpha}(e,e)} \end{align} satisfies the normalization condition \begin{align} \alpha(e,g)=\alpha(g,e)=1 \end{align} for all $g\in G$. A 2-cocycle $\alpha\colon G\times G\to U(1)$ with this normalization condition is called a \textbf{multiplier system} of $G$.


Given a multiplier system $\alpha$, a map $U\colon G\to U(d)$ satisfying
\begin{align}
U(g)U(g')=\alpha(g,g')U(gg')
\end{align}
for all $g,g'\in G$ is called a \textbf{projective representation} of $G$ associated with $\alpha$. Ordinary (linear) representations correspond to the trivial multiplier system $\alpha=\alpha^{(0)}$. The dimension $d$ of the matrices is called the \textbf{dimension of the representation}, and $\chi_U(g)\coloneqq\mathrm{tr}U(g)$ is called the \textbf{character}.

\begin{example}
For $G=\mathbb{Z}_2\times\mathbb{Z}_2$, a projective representation associated with the multiplier $\alpha(g,g')=(-1)^{nm'}$ is given by 
\begin{align}
U(0,0)=\sigma_0, \quad U(1,0)=\sigma_1, \quad U(0,1)=\sigma_2, \quad U(1,1)=-i\sigma_3, \label{Z2Z2rep1}
\end{align}
where $\sigma_\mu$ are the Pauli matrices. This representation has dimension $2$, and its character is $2$ for the element $(0,0)$ and $0$ for all other elements.
\end{example}

\begin{theorem}
\label{theorem3}
Let $U$ be a projective representation of $G$ associated with $\alpha$. If an element $g\in G$ is not $\alpha$-regular, then
\begin{align}
\chi_U(g)=0.
\end{align}
\end{theorem}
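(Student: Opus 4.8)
The plan is to exploit the failure of $\alpha$-regularity to produce an element of $G$ that conjugates $U(g)$ into a nontrivial scalar multiple of itself, which then forces the trace to vanish. Since $g$ is not $\alpha$-regular, by definition there exists some $h\in Z(g)$ with $\alpha(g,h)\neq\alpha(h,g)$; equivalently, the homomorphism $\varphi_g$ from Lemma~\ref{Zgrouphomo} satisfies $\varphi_g(h)\neq 1$. I would fix such an $h$ at the outset.

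Next I would compute $U(h)\,U(g)\,U(h)^{-1}$ directly from the defining relation $U(a)U(b)=\alpha(a,b)U(ab)$. Because $h\in Z(g)$ we have $gh=hg$, hence $U(gh)=U(hg)$, and therefore
\begin{align}
U(h)U(g)=\alpha(h,g)\,U(hg)=\alpha(h,g)\,U(gh)=\frac{\alpha(h,g)}{\alpha(g,h)}\,U(g)U(h)=\varphi_g(h)^{-1}\,U(g)U(h).
\end{align}
Each $U(a)$ is invertible since it lies in $U(d)$, so $U(h)$ may be cancelled on the right, yielding
\begin{align}
U(h)\,U(g)\,U(h)^{-1}=\varphi_g(h)^{-1}\,U(g).
\end{align}

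Finally I would take the trace of both sides. By cyclicity of the trace the left-hand side equals $\chi_U(g)$, so $\chi_U(g)=\varphi_g(h)^{-1}\chi_U(g)$, i.e. $\bigl(1-\varphi_g(h)^{-1}\bigr)\chi_U(g)=0$. Since $\varphi_g(h)\neq 1$, the prefactor is nonzero, and we conclude $\chi_U(g)=0$.

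There is no real obstacle here; the only points requiring mild care are the bookkeeping of the cocycle phases — one should confirm the conjugation factor is $\varphi_g(h)^{\pm1}$, though the sign of the exponent is irrelevant to the conclusion — and the remark that invertibility of the $U(a)$, guaranteed by the target group being $U(d)$, is what licenses the conjugation and cancellation above. Note also that normalization of $\alpha$ is not needed for this argument.
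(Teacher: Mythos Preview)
Your proof is correct and is essentially identical to the paper's: both pick $h\in Z(g)$ with $\alpha(g,h)\neq\alpha(h,g)$, derive $U(h)U(g)U(h)^{-1}=\dfrac{\alpha(h,g)}{\alpha(g,h)}U(g)$ from the defining relation, and take the trace. The only cosmetic difference is that you package the scalar as $\varphi_g(h)^{-1}$ while the paper leaves it as the ratio of cocycle values.
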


\begin{proof}
For $h\in Z(g)$,
\begin{align}
U(g)U(h)=\alpha(g,h)U(gh),\quad U(h)U(g)=\alpha(h,g)U(hg).
\end{align}
Since $gh=hg$, we have
\begin{align}
\alpha(g,h)U(h)U(g)U(h)^{-1}=\alpha(h,g)U(g).
\end{align}
Taking the trace, we obtain the result when $\alpha(g,h)\neq\alpha(h,g)$.
\end{proof}

When an element $g \in G$ is $\alpha$-regular, it is possible to have $\chi_U(g) \neq 0$. However, note that in projective representations, even if elements $g$ and $g'$ belong to the same conjugacy class, it \emph{does not necessarily} hold that $\chi_U(g')=\chi_U(g)$. Indeed, if $g'=xgx^{-1}$, then
\begin{align}
\chi_U(g)=\frac{\alpha(x,g)\alpha(xg,x^{-1})}{\alpha(x,x^{-1})}\chi_U(g'),  \label{ggpchi}
\end{align}
and thus a discrepancy by a $U(1)$ factor arises.
\begin{screen}
\begin{theorem}
If there exists a projective representation of dimension $d$, then $\alpha^d\sim\alpha^{(0)}$. In particular, if there is a one-dimensional projective representation, then $\alpha\sim\alpha^{(0)}$.
\end{theorem}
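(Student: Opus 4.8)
The plan is to extract a $1$-cochain directly from the projective representation by taking determinants. Suppose $U\colon G\to U(d)$ is a projective representation associated with $\alpha$, so that $U(g)U(g')=\alpha(g,g')U(gg')$ for all $g,g'\in G$. Applying $\det$ to both sides and using multiplicativity of the determinant together with $\det\big(\alpha(g,g')U(gg')\big)=\alpha(g,g')^d\det U(gg')$ yields the scalar identity
\begin{align}
\det U(g)\,\det U(g') = \alpha(g,g')^d\,\det U(gg').
\end{align}

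Next I would define $\beta\colon G\to U(1)$ by $\beta(g)\coloneqq\det U(g)$. Because each $U(g)$ is unitary, $|\det U(g)|=1$, so $\beta$ is a genuine $1$-cochain valued in $U(1)$. Rearranging the displayed identity gives $\alpha(g,g')^d=\beta(g)\beta(g')/\beta(gg')=\delta\beta(g,g')$, hence $\alpha^d=\delta\beta\in B^2(G,U(1))$, which is exactly the assertion $\alpha^d\sim\alpha^{(0)}$. The special case $d=1$ is then immediate: there $U(g)\in U(1)$ itself, $\alpha=\alpha^1=\delta\beta$ with $\beta=U$, so $\alpha\sim\alpha^{(0)}$, recovering the earlier fact that a cocycle admitting a one-dimensional projective representation is a coboundary.

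I do not expect a real obstacle here; the entire content is the determinant trick. The one point worth a sentence of care is ensuring $\beta$ takes values in $U(1)$ and not merely in $\mathbb{C}^\times$, so that it is a legitimate $1$-cochain in the sense of Sec.~\ref{sec:cohomology}; this is guaranteed by the standing assumption $U(g)\in U(d)$. (Had one only assumed $U(g)\in\mathrm{GL}(d,\mathbb{C})$, one would either rescale each $U(g)$ by $|\det U(g)|^{-1/d}$ or invoke the unitarizability of finite-group projective representations before running the same argument.)
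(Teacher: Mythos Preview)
Your proof is correct and is essentially identical to the paper's: both take the determinant of $U(g)U(g')=\alpha(g,g')U(gg')$, set $\beta(g)=\det U(g)\in U(1)$, and conclude $\alpha^d=\delta\beta$. Your additional remark about why $\beta$ lands in $U(1)$ (and the aside about the $\mathrm{GL}(d,\mathbb{C})$ case) is a nice touch not spelled out in the paper, but the argument is the same.
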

\end{screen}
\begin{proof}
Taking the determinant of both sides of the defining relation for a projective representation,
\begin{align}
U(g)U(g')=\alpha(g,g')U(gg'),
\end{align}
we have
\begin{align}
\alpha(g,g')^d=\frac{\det U(g) \det U(g')}{\det U(gg')}.
\end{align}
Thus, by defining a 1-cochain $\beta(g)=\det U(g)\in U(1)$, we see $\alpha(g,g')^d=\delta\beta(g,g')$, proving the claim.
\end{proof}

As we will see later, a regular projective representation, whose dimension is $|G|$, always exists for any multiplier system $\alpha$. Thus,
\begin{align}
\alpha^{|G|}\sim\alpha^{(0)}.\label{omegad}
\end{align}

\subsubsection{Schur's Lemma}
A projective representation $U$ (with a fixed multiplier $\alpha$) is
\textbf{reducible} if there exists non-trivial projective
representations $V$, $W$ of equal multiplier such that, in a common
basis,
\begin{align}
  U(g)=
  \begin{pmatrix}
    V(g) & 0\\
    0    & W(g)
  \end{pmatrix}
  \qquad(\forall g\in G).
\end{align}
Otherwise $U$ is \textbf{irreducible}.  
By definition every one-dimensional representation is irreducible.
Irreducible representations will be denoted $U_A$ with dimension
$d_A$.

Let $U_A$ and $U_B$ be irreducible projective representations associated with the same multiplier system $\alpha$. A matrix $M$ satisfying
\begin{align}
U_A(g)M=MU_B(g)
\end{align}
for all $g\in G$ is either a zero matrix or an invertible square matrix with dimension $d_A=d_B$. Moreover, if $A=B$, then $M$ must be proportional to the identity matrix.

\begin{screen}
\begin{theorem}[Orthogonality Relations for Representations]
\label{thmorthogonalrep}
Let $U_A$ and $U_B$ be irreducible projective representations of $G$ associated with $\alpha$. Then the following relation holds:
\begin{align}
\sum_{g\in G}[U_A(g)]_{ij}[U_B(g)]_{kl}^*=\frac{|G|}{d_A}\delta_{AB}\delta_{ik}\delta_{jl}.\label{orthogonalrep}
\end{align}
\end{theorem}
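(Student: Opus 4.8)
The plan is to run the classical averaging argument behind Schur's orthogonality relations, while carefully tracking the cocycle phases. First I would fix an arbitrary $d_A\times d_B$ matrix $X$ and form the averaged matrix
\begin{align}
M \coloneqq \sum_{g\in G} U_A(g)\,X\,U_B(g)^{\dagger},
\end{align}
where $U_B(g)^{\dagger}=U_B(g)^{-1}$ since the projective representations here are valued in $U(d)$. The crucial structural step is to verify that $M$ intertwines, i.e.\ $U_A(h)M=MU_B(h)$ for all $h\in G$. Using $U_A(h)U_A(g)=\alpha(h,g)U_A(hg)$ together with the identity $U_B(g)^{\dagger}=\overline{\alpha(h,g)}\,U_B(hg)^{\dagger}U_B(h)$ --- obtained by taking the adjoint of $U_B(g)=\alpha(h,g)U_B(h)^{\dagger}U_B(hg)$, which uses the \emph{same} multiplier $\alpha$ --- the two phase factors collapse into $|\alpha(h,g)|^{2}=1$, and after the substitution $g\mapsto hg$ one arrives at $U_A(h)M=MU_B(h)$. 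This phase bookkeeping is essentially the only place where projectivity enters, and it is the step I expect to demand the most care.

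Given the intertwining property, Schur's lemma as stated just above does the structural work: if $A\neq B$ then $M=0$ for every choice of $X$, whereas if $A=B$ then $M=\lambda(X)\,I$ for a scalar $\lambda(X)$ depending linearly on $X$. To extract individual matrix entries I would specialize $X=E_{jl}$, the elementary matrix with a single $1$ in entry $(j,l)$; a short computation then gives $M_{ik}=\sum_{g\in G}[U_A(g)]_{ij}\,[U_B(g)]_{kl}^{*}$, which is exactly the left-hand side of \eqref{orthogonalrep}. For $A\neq B$ this vanishes, matching the right-hand side. For $A=B$ it equals $\lambda_{jl}\,\delta_{ik}$, and I would pin down $\lambda_{jl}$ by tracing: on one hand $\mathrm{tr}\,M=d_A\lambda_{jl}$, and on the other hand, since $U_A(g)^{\dagger}U_A(g)=I$, one has $\mathrm{tr}\,M=\sum_{g\in G}\mathrm{tr}(E_{jl})=|G|\,\delta_{jl}$. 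Hence $\lambda_{jl}=\tfrac{|G|}{d_A}\,\delta_{jl}$, and substituting back reproduces the claimed identity.

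The only implicit input is that $U_A$ and $U_B$ may be taken unitary, so that $U(g)^{\dagger}=U(g)^{-1}$; this is already built into the definition of a projective representation used in this note, so no separate unitarizability argument is needed. The main obstacle is thus confined to the phase tracking in the intertwining step, and the remainder is essentially a verbatim transcription of the ordinary (linear) case.
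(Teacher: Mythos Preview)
Your proposal is correct and follows essentially the same route as the paper: form the averaged intertwiner $M=\sum_{g}U_A(g)XU_B(g)^{\dagger}$, verify the intertwining relation by cancelling the identical multiplier phases (the paper writes this as $U_A(g')MU_B(g')^{\dagger}=M$, you as $U_A(h)M=MU_B(h)$, which are equivalent since $U_B$ is unitary), invoke Schur's lemma, specialize $X=E_{jl}$, and fix the scalar by tracing over $i=k$. The only cosmetic difference is that the paper computes the constant by summing over the diagonal explicitly, whereas you phrase it as a trace using $U_A(g)^{\dagger}U_A(g)=I$; these are the same computation.
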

\end{screen}

\begin{proof}
Consider an arbitrary $d_A\times d_B$ matrix $X$ and define
\begin{align}
M\coloneqq\sum_{g\in G}U_A(g)XU_B(g)^\dagger.
\end{align}
Then, for any $g'\in G$,
\begin{align}
U_A(g')MU_B(g')^\dagger&=\sum_{g\in G}U_A(g')U_A(g)XU_B(g)^\dagger U_B(g')^\dagger=\sum_{g\in G}[U_A(g')U_A(g)]X[U_B(g')U_B(g)]^\dagger\notag\\
&=\sum_{g\in G}[\alpha(g',g)U_A(g'g)]X[\alpha(g',g)U_B(g'g)]^\dagger=\sum_{g''\in G}U_A(g'')XU_B(g'')^\dagger\notag\\
&=M,
\end{align}
thus,
\begin{align}
U_A(g')M=MU_B(g').\label{UAMMUB}
\end{align}
By Schur's lemma, $M=0$ if $A\ne B$, and $M$ is proportional to the identity matrix if $A=B$. Choose $X$ such that its $(j,l)$ element is 1 and all other elements are 0, i.e., $[X]_{j'l'}=\delta_{j'j}\delta_{l'l}$. If $B\neq A$, we have
\begin{align}
\sum_{g\in G}[U_A(g)]_{ij'}[X]_{j'l'}[U_B(g)^\dagger]_{l'k}=\sum_{g\in G}[U_A(g)]_{ij}[U_B(g)^*]_{kl}=0.
\end{align}
For $B=A$, we have
\begin{align}
\sum_{g\in G}[U_A(g)]_{ij'}[X]_{j'l'}[U_A(g)^\dagger]_{l'k}=\sum_{g\in G}[U_A(g)]_{ij}[U_A(g)^*]_{kl}=c_{jl}\delta_{ik}.
\end{align}
Summing over $i$ with $k=i$, we obtain $c_{jl}=(|G|/d_A)\delta_{jl}$.
\end{proof}

\begin{screen}
\begin{theorem}[Orthogonality Relations for Characters 1]
Let $U_A$ ($A=1,2,\cdots,N_{\mathrm{rep}}^{(\alpha)}$) be irreducible projective representations of $G$ associated with $\alpha$, and let their characters be denoted by $\chi_{U_A}$. Suppose $C_I$ ($I=1,2,\cdots,|C^{(\alpha)}(G)|$) are the $\alpha$-regular conjugacy classes with representatives $g_I\in C_I$. Then, the following relation holds:
\begin{align}
\sum_{I=1}^{|C^{(\alpha)}(G)|}|C_I|\chi_{U_A}(g_I)^*\chi_{U_B}(g_I)=|G|\delta_{AB}.\label{orthogonalcharacter1}
\end{align}
\end{theorem}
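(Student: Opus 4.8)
The plan is to reduce the statement to the matrix-element orthogonality relation of Theorem~\ref{thmorthogonalrep} by contracting indices, and then to rewrite the resulting sum over $G$ as a sum over $\alpha$-regular conjugacy classes. The one genuinely projective ingredient will be the observation that, although a single projective character is \emph{not} a class function, the Hermitian-paired product $\chi_{U_A}(g)^*\chi_{U_B}(g)$ \emph{is}.

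First I would take \eqref{orthogonalrep}, set $j=i$ and $l=k$, and sum over $i=1,\dots,d_A$ and $k=1,\dots,d_B$. The left side becomes $\sum_{g\in G}\chi_{U_A}(g)\chi_{U_B}(g)^*$, while on the right $\delta_{ik}\delta_{jl}$ collapses to $\delta_{ik}$ (since $\delta_{ik}^2=\delta_{ik}$) and $\sum_{i,k}\delta_{ik}$ combines with the $\delta_{AB}$ to give $d_A$ (relevant only when $A=B$), cancelling the prefactor $|G|/d_A$. Hence $\sum_{g\in G}\chi_{U_A}(g)\chi_{U_B}(g)^*=|G|\delta_{AB}$, and complex-conjugating (noting $\delta_{AB}\in\mathbb{R}$) gives the equivalent form $\sum_{g\in G}\chi_{U_A}(g)^*\chi_{U_B}(g)=|G|\delta_{AB}$.

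It then remains to turn $\sum_{g\in G}$ into $\sum_I |C_I|\,(\cdots)$. Here I would invoke \eqref{ggpchi}: for $g'=xgx^{-1}$ one has $\chi_U(g)=\lambda(x,g)\,\chi_U(g')$ with the phase $\lambda(x,g)=\alpha(x,g)\alpha(xg,x^{-1})/\alpha(x,x^{-1})\in U(1)$, and the crucial point is that $\lambda(x,g)$ depends only on the cocycle $\alpha$, not on the choice of representation. Consequently $\chi_{U_A}(g')^*=\lambda(x,g)\,\chi_{U_A}(g)^*$ while $\chi_{U_B}(g')=\lambda(x,g)^{-1}\chi_{U_B}(g)$ (using $|\lambda|=1$), so the phases cancel and $\chi_{U_A}(g')^*\chi_{U_B}(g')=\chi_{U_A}(g)^*\chi_{U_B}(g)$: the product is constant on each conjugacy class. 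Furthermore, by the vanishing theorem proved above, $\chi_{U_A}(g)=0$ whenever $g$ is not $\alpha$-regular, so the product vanishes identically on non-$\alpha$-regular classes. Collecting terms class by class, each $\alpha$-regular class $C_I$ contributes $|C_I|$ equal terms $\chi_{U_A}(g_I)^*\chi_{U_B}(g_I)$, which identifies the left side with $\sum_{I=1}^{|C^{(\alpha)}(G)|}|C_I|\,\chi_{U_A}(g_I)^*\chi_{U_B}(g_I)$ and completes the proof.

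The step I expect to need the most care is the class-function claim: one must track that the anomalous $U(1)$ factor in \eqref{ggpchi} is independent of the representation and has unit modulus, since it is precisely this that allows it to cancel against the complex conjugate in the paired product. Everything else — the index contraction and the bookkeeping of class sizes (via \eqref{ZCG} if one wants to be fully explicit) — is routine.
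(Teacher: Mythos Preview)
Your proof is correct and follows exactly the paper's route: contract indices in \eqref{orthogonalrep} to get $\sum_{g\in G}\chi_{U_A}(g)^*\chi_{U_B}(g)=|G|\delta_{AB}$, drop the non-$\alpha$-regular elements via the vanishing theorem, and then use \eqref{ggpchi} to collapse each class to its representative. In fact you are more explicit than the paper about \emph{why} the anomalous phase in \eqref{ggpchi} cancels in the paired product $\chi_{U_A}^*\chi_{U_B}$ --- the paper simply remarks that ``the difference is at most a phase factor, thus leading to the stated relation'' --- so your version is a slight improvement in clarity.
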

\end{screen}

\begin{proof}
Setting $j=i$ and $l=k$ in the orthogonality relation for representations \eqref{orthogonalrep} and summing over $i,k$, we obtain
\begin{align}
\sum_{g\in G}\chi_{U_A}(g)^*\chi_{U_B}(g)=|G|\delta_{AB}.
\end{align}
From Theorem \ref{theorem3}, only $\alpha$-regular elements $g$ contribute to this sum. Furthermore, although for $g\in C_I$, we generally do not have $\chi_{U_A}(g)=\chi_{U_A}(g_I)$ exactly, by Eq.~\eqref{ggpchi} the difference is at most a phase factor, thus leading to the stated relation.
\end{proof}

\subsection{Group Algebra}
\subsubsection{Group Algebra and Projective Representations}
To obtain further information about irreducible representations, we introduce the group algebra. Fixing a multiplier system $\alpha$, consider an $|G|$-dimensional vector space over complex numbers. Suppose the linearly independent basis vectors of this vector space are given by $\{|g\rangle\}_{g\in G}$ indexed by elements $g\in G$. Define the multiplication on this vector space by
\begin{align}
|g\rangle\cdot|g'\rangle\coloneqq\alpha(g,g')|gg'\rangle. \label{eq:twistedmul}
\end{align}
The identity element is $|e\rangle$, satisfying $|e\rangle\cdot|g\rangle=|g\rangle\cdot|e\rangle=|g\rangle$. This algebra is called the \textbf{$\alpha$-twisted group algebra}.

The operation of left multiplication by $|g\rangle$, denoted by $L_g$, is a linear map on this vector space:
\begin{align}
L_g\Big(\sum_{g''\in G}|g''\rangle c_{g''}\Big)&\coloneqq\sum_{g''\in G}|g\rangle\cdot|g''\rangle c_{g''}\notag\\
&=\sum_{g''\in G}\alpha(g,g'')|gg''\rangle c_{g''}\notag\\
&=\sum_{g'\in G}|g'\rangle\sum_{g''\in G}[U_{\mathrm{reg}}(g)]_{g',g''}c_{g''}
\end{align}
Here, we define a $|G|$-dimensional matrix $U_{\mathrm{reg}}(g)$ with components given by
\begin{align}
[U_{\mathrm{reg}}(g)]_{g',g''}=\delta_{g',gg''}\alpha(g,g'')
\end{align}
The matrix $U_{\mathrm{reg}}(g)$ satisfies
\begin{align}
[U_{\mathrm{reg}}(g)^\dagger U_{\mathrm{reg}}(g)]_{g',g''}=\delta_{g',g''}
\end{align}
thus it is unitary. Moreover, it fulfills
\begin{align}
U_{\mathrm{reg}}(g)U_{\mathrm{reg}}(g')=\alpha(g,g')U_{\mathrm{reg}}(gg'),
\end{align}
making it a projective representation associated with the multiplier system $\alpha$. This representation is called the \textbf{regular projective representation}. Its character is given by
\begin{align}
\chi_{U_{\mathrm{reg}}}(g)=|G|\delta_{g,e}. \label{tracereg1}
\end{align}

\begin{screen}
\begin{theorem}[Sum of Squares of Dimensions]
Let $U_A$ ($A=1,2,\dots,N_{\mathrm{rep}}^{(\alpha)}$) be irreducible projective representations associated with $\alpha$, and let their dimensions be $d_A$. Then, the following relation holds:
\begin{align}
\sum_{A=1}^{N_{\mathrm{rep}}^{(\alpha)}}d_A^2=|G|.\label{repdim}
\end{align}
\end{theorem}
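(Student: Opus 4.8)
The plan is to decompose the regular projective representation $U_{\mathrm{reg}}$ into irreducibles and count multiplicities using the character orthogonality relations established above. Since $U_{\mathrm{reg}}$ is a projective representation associated with $\alpha$, it is (completely reducible, hence) unitarily equivalent to a direct sum $\bigoplus_A m_A U_A$ of the irreducible projective representations $U_A$ associated with $\alpha$, for some non-negative integers $m_A$. Taking dimensions gives $|G| = \sum_A m_A d_A$, so the goal reduces to showing $m_A = d_A$ for every $A$.

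To extract $m_A$, I would take characters: $\chi_{U_{\mathrm{reg}}}(g) = \sum_A m_A \chi_{U_A}(g)$. Using the explicit formula \eqref{tracereg1}, namely $\chi_{U_{\mathrm{reg}}}(g) = |G|\delta_{g,e}$, together with the fact that $U_A(e)$ is (up to the normalization $\alpha(e,e)=1$ for a multiplier system) the $d_A\times d_A$ identity so $\chi_{U_A}(e) = d_A$, I can isolate $m_A$ by pairing against $\chi_{U_A}$ via the character orthogonality relation \eqref{orthogonalcharacter1}. Concretely, $\sum_{I} |C_I|\,\chi_{U_A}(g_I)^*\chi_{U_{\mathrm{reg}}}(g_I)$: the only $\alpha$-regular class contributing a nonzero value of $\chi_{U_{\mathrm{reg}}}$ is $C(e) = \{e\}$ (the identity is always $\alpha$-regular, and $\chi_{U_{\mathrm{reg}}}$ vanishes elsewhere), so this sum equals $\chi_{U_A}(e)^* \cdot |G| = d_A |G|$. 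On the other hand, expanding $\chi_{U_{\mathrm{reg}}} = \sum_B m_B \chi_{U_B}$ and applying \eqref{orthogonalcharacter1} term by term gives $\sum_B m_B |G|\delta_{AB} = m_A |G|$. Equating the two yields $m_A = d_A$, and substituting back into $|G| = \sum_A m_A d_A$ gives $\sum_A d_A^2 = |G|$.

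The one point requiring a little care — and the main potential obstacle — is the claim that $U_{\mathrm{reg}}$ is completely reducible, i.e. that a projective representation of a finite group decomposes as a direct sum of irreducibles. This is the projective analogue of Maschke's theorem; it follows because any projective representation can be unitarized by the standard averaging trick (average an arbitrary inner product over the group, using that $|\alpha(g,g')|=1$ so the phases do not obstruct positivity), and a unitary representation that is reducible splits off an orthogonal complement that is again a subrepresentation of the same multiplier, so one can induct on dimension. I would either invoke this as a standard fact or include a one-line remark; everything else is a direct application of \eqref{orthogonalcharacter1}, \eqref{tracereg1}, and the observation that $e$ is $\alpha$-regular with $\chi_{U_A}(e) = d_A$.
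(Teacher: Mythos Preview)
Your proposal is correct and follows essentially the same route as the paper: decompose the regular projective representation into irreducibles, use character orthogonality together with $\chi_{U_{\mathrm{reg}}}(g)=|G|\delta_{g,e}$ to identify each multiplicity as $d_A$, and then compare dimensions. The only cosmetic difference is that the paper first records the a priori bound $\sum_A d_A^2 \le |G|$ from the matrix-element orthogonality \eqref{orthogonalrep} and then observes that the regular-representation computation saturates it, whereas you compute $m_A=d_A$ directly for every $A$ and bypass the inequality; your explicit remark on projective Maschke (complete reducibility) fills in a step the paper leaves implicit.
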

\end{screen}

\begin{proof}
Interpreting the orthogonality relation for representations \eqref{orthogonalrep} as the orthogonality of $\sum_{A=1}^{N_{\mathrm{rep}}^{(\alpha)}}d_A^2$ vectors of dimension $|G|$, we have
\begin{align}
\sum_{A=1}^{N_{\mathrm{rep}}^{(\alpha)}}d_A^2\leq |G|.\label{repdim2}
\end{align}
On the other hand, using the orthogonality of characters, if the regular projective representation contains an irreducible representation $U_A$, it does so exactly $d_A$ times. Therefore, we have $U_{\mathrm{reg}}(g)=\bigoplus_{A}d_A U_A(g)$, and taking the trace yields
\begin{align}
\chi_{U_{\mathrm{reg}}}(g)=\sum_{A}d_A \chi_{U_A}(g).\label{tracereg2}
\end{align}
From \eqref{tracereg1}, setting $g=e$ gives
\begin{align}
\sum_{A}d_A^2=|G|,
\end{align}
which achieves the upper bound of \eqref{repdim2}. Thus, there are no other irreducible representations, and each irreducible representation is included exactly $d_A$ times in the regular projective representation.
\end{proof}

\begin{example}
The group $G=\mathbb{Z}_2\times\mathbb{Z}_2$ has four one-dimensional linear representations
\begin{align}
U(0,0)=1, \quad U(1,0)=\xi_1, \quad U(0,1)=\xi_2, \quad U(1,1)=\xi_1\xi_2\label{Z2Z2rep2}
\end{align}
with $\xi_1,\xi_2=\pm1$. Meanwhile, if we assume the multiplier system $\alpha$ given by \eqref{egomega}, there is only one irreducible projective representation of dimension two in Eq.~\eqref{Z2Z2rep1}. In either case, the relation \eqref{repdim} holds, as $1^2\times 4=4$ and $2^2\times 1=4$.
\end{example}

\subsubsection{Products of Conjugacy Classes and Structure Constants}
\begin{lemma}
\label{structure}
For $g\in G$, define
\begin{align}
|\Psi_g\rangle\coloneqq\sum_{x\in G}\frac{1}{\alpha(x,x^{-1})}|x\rangle\cdot|g\rangle\cdot|x^{-1}\rangle
=\sum_{x\in G}\frac{\alpha(x,g)\alpha(xg,x^{-1})}{\alpha(x,x^{-1})}|xgx^{-1}\rangle.
\end{align}
Then, for all $g'\in G$,
\begin{align}
|g'\rangle\cdot|\Psi_g\rangle=|\Psi_g\rangle\cdot|g'\rangle.\label{ringcenter}
\end{align}
\end{lemma}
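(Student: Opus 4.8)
The plan is to recognize $|\Psi_g\rangle$ as a "twisted class sum" and to exploit the invertibility of the basis elements $|x\rangle$ in the $\alpha$-twisted group algebra, which lets all the cocycle bookkeeping be handled abstractly. First I would note that, because $\alpha$ is a multiplier system, $|e\rangle$ is a two-sided identity ($|e\rangle\cdot|g\rangle=\alpha(e,g)|g\rangle=|g\rangle$ and similarly on the right), and each $|x\rangle$ is invertible: from $|x\rangle\cdot|x^{-1}\rangle=\alpha(x,x^{-1})|e\rangle$ with $\alpha(x,x^{-1})\in U(1)$ nonzero, the element $|x\rangle^{-1}\coloneqq\alpha(x,x^{-1})^{-1}|x^{-1}\rangle$ is a left inverse of $|x\rangle$, and by the standard fact that a left inverse and a right inverse of the same element of an associative unital algebra coincide, it is a two-sided inverse. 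With this notation the defining expression collapses to
\begin{align}
|\Psi_g\rangle=\sum_{x\in G}|x\rangle\cdot|g\rangle\cdot|x\rangle^{-1},
\end{align}
the twisted analogue of an ordinary conjugacy-class sum.

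Next I would conjugate $|\Psi_g\rangle$ by an arbitrary $|g'\rangle$. Using associativity and that complex scalars are central in the algebra, for each $x\in G$
\begin{align}
|g'\rangle\cdot\bigl(|x\rangle\cdot|g\rangle\cdot|x\rangle^{-1}\bigr)\cdot|g'\rangle^{-1}
=\bigl(|g'\rangle\cdot|x\rangle\bigr)\cdot|g\rangle\cdot\bigl(|g'\rangle\cdot|x\rangle\bigr)^{-1}.
\end{align}
Since $|g'\rangle\cdot|x\rangle=\alpha(g',x)|g'x\rangle$ and the scalar $\alpha(g',x)$ cancels against $\alpha(g',x)^{-1}$ appearing in the inverse of the very same expression, this equals $|g'x\rangle\cdot|g\rangle\cdot|g'x\rangle^{-1}$. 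Summing over $x$ and using that $x\mapsto g'x$ is a bijection of $G$ gives $|g'\rangle\cdot|\Psi_g\rangle\cdot|g'\rangle^{-1}=\sum_{y\in G}|y\rangle\cdot|g\rangle\cdot|y\rangle^{-1}=|\Psi_g\rangle$. Multiplying on the right by $|g'\rangle$ yields the claimed identity $|g'\rangle\cdot|\Psi_g\rangle=|\Psi_g\rangle\cdot|g'\rangle$.

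The only points requiring care are that $|e\rangle$ is a genuine two-sided unit (which uses the normalization $\alpha(e,g)=\alpha(g,e)=1$ of the multiplier system) and the left/right inverse identification above; everything else is packaged into associativity of the twisted product. In particular the phase bookkeeping that would otherwise be the main nuisance — and which the second, explicit form of $|\Psi_g\rangle$ in the statement spells out — never surfaces. I expect the only real obstacle to be presenting this cleanly; a purely computational alternative, substituting $y=g'xg'^{-1}$ in that second expression and collapsing the resulting product of six $\alpha$-factors by repeated use of the cocycle condition \eqref{cocycle}, also works but is considerably messier.
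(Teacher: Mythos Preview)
Your proof is correct and takes a genuinely different, more conceptual route than the paper. The paper proceeds by direct computation: it expands $|g'\rangle\cdot|\Psi_g\rangle$ in the basis, substitutes $y=g'x$, and then collapses the resulting ratio of $\alpha$-factors via explicit applications of the cocycle condition on the triples $(g',g'^{-1}y,y^{-1})$, $(g'^{-1}y,y^{-1},g')$, and $(g',g'^{-1},g')$. You instead observe that the basis vectors $|x\rangle$ are units in the twisted algebra, rewrite $|\Psi_g\rangle=\sum_x|x\rangle\cdot|g\rangle\cdot|x\rangle^{-1}$, and reduce the claim to one line of conjugation plus a reindexing, with all phase bookkeeping absorbed into associativity and the identity $(ab)^{-1}=b^{-1}a^{-1}$. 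Your approach is cleaner and makes the ``twisted class sum'' structure transparent; the paper's has the virtue of being self-contained at the level of explicit cocycles and recycles the same manipulation style used elsewhere (e.g., in Lemma~\ref{omegasim}).

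One minor point to tighten: from $|x\rangle\cdot|x^{-1}\rangle=\alpha(x,x^{-1})|e\rangle$ you obtain that your $|x\rangle^{-1}$ is a \emph{right} inverse, not a left one, and the ``standard fact'' you invoke needs both a left and a right inverse already in hand. The simplest fix is to note that the cocycle condition on $(x,x^{-1},x)$ together with normalization gives $\alpha(x,x^{-1})=\alpha(x^{-1},x)$, so the same element is also a left inverse; alternatively, one-sided inverses in a finite-dimensional algebra are automatically two-sided. This does not affect the rest of your argument.
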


\begin{proof}
We have
\begin{align}
|g'\rangle\cdot|\Psi_g\rangle&=\sum_{x\in G}\frac{\alpha(g',x)}{\alpha(x,x^{-1})}|g'x\rangle\cdot|g\rangle\cdot|x^{-1}\rangle\notag\\
&=\sum_{y\in G}\frac{\alpha(g',g'{}^{-1}y)}{\alpha(g'{}^{-1}y,y^{-1}g')}|y\rangle\cdot|g\rangle\cdot|y^{-1}g'\rangle\notag\\
&=\sum_{y\in G}\frac{\alpha(g',g'{}^{-1}y)}{\alpha(g'{}^{-1}y,y^{-1}g')\alpha(y^{-1},g')}|y\rangle\cdot|g\rangle\cdot|y^{-1}\rangle\cdot|g'\rangle\notag\\
&=|\Psi_g\rangle\cdot|g'\rangle.\label{derivation}
\end{align}
The second equality follows by the substitution $y=g'x$, and the last equality uses the cocycle condition on $(g',g'{}^{-1}y,y^{-1})$, $(g'{}^{-1}y,y^{-1},g')$, and $(g',g'{}^{-1},g')$:
\begin{align}
\frac{\alpha(g',g'{}^{-1}y)}{\alpha(g'{}^{-1}y,y^{-1}g')\alpha(y^{-1},g')}=\frac{1}{\alpha(y,y^{-1})}.
\end{align}
\end{proof}
An element $|\Psi_g\rangle$ satisfying \eqref{ringcenter} is called the center of the group algebra. Writing $Q=G/Z(g)$, each element of $G$ can be expressed as $qh$ ($q\in Q$, $h\in Z(g)$). Thus,
\begin{align}
|\Psi_g\rangle=\sum_{h\in Z(g)}\sum_{q\in Q}\frac{\alpha(qh,g)\alpha(qhg,h^{-1}q^{-1})}{\alpha(qh,h^{-1}q^{-1})}|qgq^{-1}\rangle=\sum_{h\in Z(g)}\frac{\alpha(h,g)}{\alpha(g,h)}\sum_{q\in Q}\frac{\alpha(q,g)\alpha(qg,q^{-1})}{\alpha(q,q^{-1})}|qgq^{-1}\rangle.\label{Sgtrans}
\end{align}
Applying \eqref{regularClG} to the sum over $h$, we find $|\Psi_g\rangle=0$ unless $g$ is $\alpha$-regular.

\begin{lemma}
\label{structure2}
Let $C_I$ ($I=1,2,\dots,|C^{(\alpha)}(G)|$) be the $\alpha$-regular conjugacy classes of $G$, with representatives $g_I\in C_I$. Then,
\begin{align}
|\Psi_{g_I}\rangle\cdot|\Psi_{g_J^{-1}}\rangle=\sum_K c_{IJ}^{K}|\Psi_{g_K}\rangle.\label{eq:structure}
\end{align}
Moreover, if the conjugacy class consisting solely of the identity element $e$ is denoted by $C_1$, we have
\begin{align}
c_{IJ}^{K=1}=\alpha(g_I,g_I^{-1})|Z(g_I)|\delta_{IJ}.\label{cIJK}
\end{align}
\end{lemma}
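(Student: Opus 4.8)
The plan is to treat the two assertions separately: first show that the product $|\Psi_{g_I}\rangle\cdot|\Psi_{g_J^{-1}}\rangle$ lies in the linear span of the $|\Psi_{g_K}\rangle$ (so the coefficients $c_{IJ}^K$ exist), and then pin down the single coefficient $c_{IJ}^{K=1}$ by projecting onto the basis vector $|e\rangle$.

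For existence I would argue that $\{|\Psi_{g_K}\rangle\}$, ranging over $\alpha$-regular classes, spans the center of the $\alpha$-twisted group algebra, and then invoke that this algebra is associative (which is precisely the cocycle condition \eqref{cocycle}), so the center is a subalgebra and in particular contains $|\Psi_{g_I}\rangle\cdot|\Psi_{g_J^{-1}}\rangle$. By Lemma \ref{structure} each $|\Psi_g\rangle$ is central, and the remark after it says $|\Psi_g\rangle=0$ unless $g$ is $\alpha$-regular. For spanning: the centrality relation only couples coefficients of mutually conjugate group elements (because $g'^{-1}g$ and $gg'^{-1}$ are conjugate), so any central $z$ decomposes into its restrictions to the individual conjugacy classes, each of which is again central; and a central element supported on a single class $C(g_0)$ is forced to be a scalar multiple of $|\Psi_{g_0}\rangle$, or zero when $g_0$ is not $\alpha$-regular — because conjugating such a $z$ by $|h\rangle$ with $h\in Z(g_0)$ rescales its $|g_0\rangle$-component by $\varphi_{g_0}(h)^{-1}$ (and no other component of $z$ contributes back to $|g_0\rangle$), so a nontrivial $\varphi_{g_0}$ kills the $|g_0\rangle$-component and hence, by transitivity of conjugation on $C(g_0)$, all of $z$. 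One small companion fact is needed so that $|\Psi_{g_J^{-1}}\rangle$ really is one of the basis vectors: $g^{-1}$ is $\alpha$-regular whenever $g$ is, which follows from $\varphi_{g^{-1}}(h)=\varphi_g(h)^{-1}$ on $Z(g^{-1})=Z(g)$, obtained by comparing the conjugation of $|h\rangle$ by $|g\rangle$ with conjugation by $|g^{-1}\rangle=\alpha(g,g^{-1})|g\rangle^{-1}$.

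For the formula \eqref{cIJK} I would take $C_1=\{e\}$, note from the normalization $\alpha(e,g)=1$ that $|\Psi_e\rangle=|G|\,|e\rangle$, and observe that every other $|\Psi_{g_K}\rangle$ is supported on $C_K\not\ni e$. Hence $c_{IJ}^{K=1}$ equals $|G|^{-1}$ times the coefficient of $|e\rangle$ in $|\Psi_{g_I}\rangle\cdot|\Psi_{g_J^{-1}}\rangle$. Expanding both factors in the $\alpha$-regular form $|\Psi_g\rangle=|Z(g)|\sum_{q\in G/Z(g)}\alpha(q,q^{-1})^{-1}|q\rangle\cdot|g\rangle\cdot|q^{-1}\rangle$ from \eqref{Sgtrans}, the product of a term from $|\Psi_{g_I}\rangle$ with a term from $|\Psi_{g_J^{-1}}\rangle$ feeds into $|e\rangle$ exactly when $(qg_Iq^{-1})(q'g_J^{-1}q'^{-1})=e$; this forces $g_I$ to be conjugate to $g_J$, which produces the $\delta_{IJ}$, and, once $I=J$, forces $q$ and $q'$ to be the same $Z(g_I)$-coset. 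Summing over the surviving diagonal terms and simplifying each by associativity together with $|g\rangle\cdot|e\rangle=|g\rangle$, $|g_I\rangle\cdot|g_I^{-1}\rangle=\alpha(g_I,g_I^{-1})|e\rangle$, and $\alpha(q,q^{-1})=\alpha(q^{-1},q)$ (itself a consequence of \eqref{cocycle} with the normalization), each of the $|C(g_I)|=|G|/|Z(g_I)|$ diagonal terms collapses to $\alpha(g_I,g_I^{-1})|Z(g_I)|^2$; multiplying by $|G|^{-1}$ gives $c_{IJ}^{K=1}=\alpha(g_I,g_I^{-1})|Z(g_I)|\,\delta_{IJ}$.

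I expect the existence step, not the computation, to be the main obstacle: one must genuinely verify that the $|\Psi_{g_K}\rangle$ span the center, which requires the somewhat fiddly bookkeeping of how conjugation (with its phases) acts on a conjugacy class, sketched above. A cleaner alternative is to prove separately that the center has dimension $|C^{(\alpha)}(G)|$ and then use that the $|\Psi_{g_K}\rangle$ are linearly independent because they have pairwise disjoint supports and are nonzero on $\alpha$-regular classes. Once existence is granted, \eqref{cIJK} is a short calculation; the only mild subtlety there is correctly identifying which pairs of coset representatives $(q,q')$ contribute to the $|e\rangle$-component.
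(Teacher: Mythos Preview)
Your proposal is correct and follows essentially the same strategy as the paper: for \eqref{eq:structure} the paper simply asserts that the $|\Psi_{g_K}\rangle$ form a basis of the center of the twisted group algebra (your sketch of why this holds is more careful than what the paper provides), and for \eqref{cIJK} the paper likewise isolates the $|e\rangle$-coefficient by noting $(xg_Ix^{-1})(yg_J^{-1}y^{-1})=e$ forces $I=J$ and $x^{-1}y\in Z(g_I)$. The only difference is cosmetic packaging of the computation: the paper keeps the full double sum over $x,y\in G$ and simplifies via an explicit cocycle identity
\[
\frac{\alpha(x,g_I)\alpha(xg_I,x^{-1})}{\alpha(x,x^{-1})}\cdot\frac{\alpha(x,g_I^{-1})\alpha(xg_I^{-1},x^{-1})}{\alpha(x,x^{-1})}=\frac{\alpha(g_I,g_I^{-1})}{\alpha(xg_Ix^{-1},xg_I^{-1}x^{-1})},
\]
whereas you first pass to the $\alpha$-regular coset form from \eqref{Sgtrans} and then collapse $|q\rangle\cdot|g_I\rangle\cdot|q^{-1}\rangle\cdot|q\rangle\cdot|g_I^{-1}\rangle\cdot|q^{-1}\rangle$ using associativity of the twisted product. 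Since associativity \emph{is} the cocycle condition, the two computations are equivalent; yours is arguably tidier and avoids writing out the cocycle manipulation by hand.
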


\begin{proof}
Equation \eqref{eq:structure} follows directly since the elements $|\Psi_{g_I}\rangle$ ($I=1,\dots,|C^{(\alpha)}(G)|$) form a basis for the center of the group algebra. Next, we prove \eqref{cIJK}. Firstly, note that $|\Psi_{g_I}\rangle\cdot|\Psi_{g_J^{-1}}\rangle$ contains the element $|\Psi_e\rangle=|G||e\rangle$ only when $I=J$. Moreover,
\begin{align}
(xg_Ix^{-1})(yg_I^{-1}y^{-1})=e\quad\Leftrightarrow\quad x^{-1}y\in Z(g_I),
\end{align}
so in the double summation $\sum_{x,y\in G}$ appearing in $|\Psi_{g_I}\rangle\cdot|\Psi_{g_J^{-1}}\rangle$, we can set $y=xh$ with $h\in Z(g_I)$. Performing transformations analogous to those in \eqref{Sgtrans}, we finally obtain
\begin{align}
\frac{\alpha(x,g_I)\alpha(xg_I,x^{-1})}{\alpha(x,x^{-1})}\frac{\alpha(x,g_I^{-1})\alpha(xg_I^{-1},x^{-1})}{\alpha(x,x^{-1})}=\frac{\alpha(g_I,g_I^{-1})}{\alpha(xg_Ix^{-1},xg_I^{-1}x^{-1})},
\end{align}
from which \eqref{cIJK} follows.
\end{proof}

\begin{screen}
\begin{theorem}[Orthogonality Relations for Characters 2]
Let $U_A$ ($A=1,2,\dots,N_{\mathrm{rep}}^{(\alpha)}$) be irreducible projective representations associated with $\alpha$, and let their characters be denoted by $\chi_{U_A}$. Suppose $C_I$ ($I=1,2,\dots,|C^{(\alpha)}(G)|$) are the $\alpha$-regular conjugacy classes, with representatives $g_I\in C_I$. Then, the following relation holds:
\begin{align}
&\sum_{A=1}^{N_{\mathrm{rep}}^{(\alpha)}}\chi_{U_A}(g_I)\chi_{U_A}(g_J)^*=|Z(g_I)|\delta_{IJ}.\label{orthogonalcharacter2}
\end{align}
\end{theorem}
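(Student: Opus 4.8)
The plan is to reduce the identity to the structure constants of the center of the $\alpha$-twisted group algebra computed in Lemma~\ref{structure2}, using the ``central characters'' of the class sums $|\Psi_{g_I}\rangle$. First I would extend each irreducible projective representation $U_A$ to an algebra homomorphism from the $\alpha$-twisted group algebra to $M_{d_A}(\mathbb{C})$ via $|g\rangle\mapsto U_A(g)$; this is consistent with the twisted multiplication \eqref{eq:twistedmul} precisely because $U_A(g)U_A(g')=\alpha(g,g')U_A(gg')$. Since $|\Psi_{g_I}\rangle$ lies in the center (Lemma~\ref{structure}), its image commutes with every $U_A(g)$, so Schur's lemma forces $U_A(|\Psi_{g_I}\rangle)=\omega_A(I)\,\mathbbm{1}$ for some scalar $\omega_A(I)$. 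The argument then has three steps: determine $\omega_A(I)$, apply $U_A$ to both sides of Eq.~\eqref{eq:structure}, and sum over $A$.

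To determine $\omega_A(I)$ I would take the trace of $U_A(|\Psi_{g_I}\rangle)=\omega_A(I)\mathbbm{1}$ using the explicit form of $|\Psi_{g_I}\rangle$ in Lemma~\ref{structure}:
\begin{align}
d_A\,\omega_A(I)=\sum_{x\in G}\frac{\alpha(x,g_I)\,\alpha(xg_I,x^{-1})}{\alpha(x,x^{-1})}\,\chi_{U_A}(xg_Ix^{-1}).
\end{align}
By Eq.~\eqref{ggpchi} (with $g=g_I$, $g'=xg_Ix^{-1}$) every summand equals $\chi_{U_A}(g_I)$, so the sum collapses to $|G|\,\chi_{U_A}(g_I)$ and therefore $\omega_A(I)=\tfrac{|G|}{d_A}\chi_{U_A}(g_I)$. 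The identical computation shows $|\Psi_{g_J^{-1}}\rangle$ acts in $U_A$ by $\tfrac{|G|}{d_A}\chi_{U_A}(g_J^{-1})$ (one checks $g_J^{-1}$ is $\alpha$-regular whenever $g_J$ is; otherwise $|\Psi_{g_J^{-1}}\rangle=0$ and the relation below is vacuous).

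Next I would apply $U_A$ to Eq.~\eqref{eq:structure}. Since each $|\Psi_{g_K}\rangle$ acts by a scalar, this gives $\omega_A(I)\,\tfrac{|G|}{d_A}\chi_{U_A}(g_J^{-1})=\sum_K c_{IJ}^{K}\,\omega_A(K)$; inserting $\omega_A(K)=\tfrac{|G|}{d_A}\chi_{U_A}(g_K)$ and cancelling the common factor $|G|^2/d_A^2$ yields
\begin{align}
\chi_{U_A}(g_I)\,\chi_{U_A}(g_J^{-1})=\frac{d_A}{|G|}\sum_{K}c_{IJ}^{K}\,\chi_{U_A}(g_K).
\end{align}
Summing over all irreducibles $A$, I would use $\sum_A d_A\chi_{U_A}(g_K)=\chi_{U_{\mathrm{reg}}}(g_K)=|G|\,\delta_{g_K,e}$ from \eqref{tracereg1}--\eqref{tracereg2}, so that only the identity class $K=1$ survives, together with $c_{IJ}^{1}=\alpha(g_I,g_I^{-1})\,|Z(g_I)|\,\delta_{IJ}$ from Lemma~\ref{structure2}; this gives
\begin{align}
\sum_{A}\chi_{U_A}(g_I)\,\chi_{U_A}(g_J^{-1})=\alpha(g_I,g_I^{-1})\,|Z(g_I)|\,\delta_{IJ}.
\end{align}
Finally, from unitarity of $U_A$ and $U_A(e)=\mathbbm{1}$ one has $U_A(g^{-1})=\alpha(g,g^{-1})\,U_A(g)^{\dagger}$, hence $\chi_{U_A}(g_J^{-1})=\alpha(g_J,g_J^{-1})\,\chi_{U_A}(g_J)^{*}$. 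For $I\neq J$ this forces $\sum_A\chi_{U_A}(g_I)\chi_{U_A}(g_J)^{*}=0$, and for $I=J$ the phase $\alpha(g_I,g_I^{-1})$ appears on both sides and cancels, leaving $\sum_A|\chi_{U_A}(g_I)|^2=|Z(g_I)|$ --- which together are the claimed identity.

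The main obstacle I anticipate is purely the $U(1)$-phase bookkeeping: the collapse of the trace sum relies on the precise phase in Eq.~\eqref{ggpchi}, and at the very end one has to check that the stray phases $\alpha(g_I,g_I^{-1})$ and $\alpha(g_J,g_J^{-1})$ (which are not $1$ in general) cancel so that the right-hand side is the real positive integer $|Z(g_I)|=|G|/|C(g_I)|$. No further input is needed; I would also note that this identity, combined with the row orthogonality of the preceding theorem, immediately gives $N_{\mathrm{rep}}^{(\alpha)}=|C^{(\alpha)}(G)|$, since a rectangular matrix with orthonormal rows and orthonormal columns must be square.
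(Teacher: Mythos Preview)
Your proposal is correct and follows essentially the same route as the paper: the paper's $M(g)$ is precisely your $U_A(|\Psi_g\rangle)$, its scalar value $\tfrac{|G|}{d_A}\chi_{U_A}(g)$ is your central character $\omega_A(I)$, and both proofs then apply $U_A$ to the class-sum multiplication formula of Lemma~\ref{structure2}, sum over $A$ using the regular character, and finish with the identity $\chi_{U_A}(g^{-1})=\alpha(g,g^{-1})\chi_{U_A}(g)^{*}$. Your phrasing via ``extend $U_A$ to an algebra homomorphism'' is slightly more conceptual, but the underlying computation is identical.
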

\end{screen}

\begin{proof}
Define
\begin{align}
M(g)\coloneqq\sum_{x\in G}\frac{1}{\alpha(x,x^{-1})}U_A(x)U_A(g)U_A(x^{-1})=\sum_{x\in G}U_A(x)U_A(g)U_A(x)^\dagger.
\end{align}
From Lemma \ref{structure}, for any $g'\in G$,
\begin{align}
U_A(g')M(g)=M(g)U_A(g').
\end{align}
By Schur's lemma, $M(g)=c(g)U_A(e)$, and calculating the trace yields $c(g)=(|G|/d_A)\chi_{U_A}(g)$. Thus,
\begin{align}
M(g)=\frac{|G|}{d_A}\chi_{U_A}(g)U_A(e).\label{MI}
\end{align}
Lemma \ref{structure2} further gives
\begin{align}
M(g_I) M(g_J^{-1})=\sum_{K=1}^{|C^{(\alpha)}(G)|}c_{IJ}^{K}M(g_K).\label{MMM}
\end{align}
Moreover, from \eqref{tracereg1} and \eqref{tracereg2} for the regular projective representation, we have
\begin{align}
\sum_{A=1}^{N_{\mathrm{rep}}^{(\alpha)}}d_A \chi_{U_A}(g_I)=|G|\delta_{I,1}.\label{tracereg3}
\end{align}
Substituting \eqref{MI} into \eqref{MMM} and multiplying by $d_A$, we obtain
\begin{align}
|G|\chi_{U_A}(g_I)\chi_{U_A}(g_J^{-1})=\sum_{K=1}^{|C^{(\alpha)}(G)|}c_{IJ}^{K}d_A\chi_{U_A}(g_K).
\end{align}
Summing over $A$ and using \eqref{tracereg3}, along with substituting \eqref{cIJK}, we derive \eqref{orthogonalcharacter2} using $\chi_{U_A}(g_I^{-1})=\alpha(g_I,g_I^{-1})\chi_{U_A}(g_I)^*$.
\end{proof}

\begin{screen}
\begin{theorem}[Number of Irreducible Projective Representations]
The number of inequivalent irreducible projective representations associated with $\alpha$ is equal to the number of $\alpha$-regular conjugacy classes. That is,
\begin{align}
N_{\mathrm{rep}}^{(\alpha)}=|C^{(\alpha)}(G)|.\label{NrNC}
\end{align}
\end{theorem}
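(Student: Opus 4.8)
The plan is to count two quantities that are both naturally indexed by the same set and show they agree. On the one hand, the $\alpha$-regular conjugacy classes $C_I$ ($I=1,\dots,|C^{(\alpha)}(G)|$) give rise to the basis elements $|\Psi_{g_I}\rangle$ of the center of the $\alpha$-twisted group algebra (by Lemma~\ref{structure} and the remark following \eqref{Sgtrans}, together with the fact that these are linearly independent since they are supported on disjoint conjugacy classes). On the other hand, the irreducible projective representations $U_A$ ($A=1,\dots,N_{\mathrm{rep}}^{(\alpha)}$) also control the structure of this algebra. So the strategy is: compute the dimension of the center of the $\alpha$-twisted group algebra in two ways, getting $|C^{(\alpha)}(G)|$ from the first description and $N_{\mathrm{rep}}^{(\alpha)}$ from the second, and conclude they are equal.

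Concretely, first I would establish that the $\alpha$-twisted group algebra $\mathbb{C}_\alpha[G]$ is semisimple (it is a finite-dimensional $*$-algebra over $\mathbb{C}$ closed under the adjoint defined by $|g\rangle^\dagger = \alpha(g,g^{-1})^{-1}|g^{-1}\rangle$, hence semisimple; alternatively invoke that $U_{\mathrm{reg}}$ decomposes fully into irreducibles, which the excerpt already proved). By Wedderburn's theorem, $\mathbb{C}_\alpha[G]\cong\bigoplus_{A=1}^{N_{\mathrm{rep}}^{(\alpha)}} M_{d_A}(\mathbb{C})$, where the summands are indexed precisely by the inequivalent irreducible projective representations associated with $\alpha$ — this uses the theorem on the sum of squares of dimensions \eqref{repdim} to know that every Wedderburn block corresponds to one of the $U_A$ and no blocks are missing. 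The center of a matrix algebra $M_{d_A}(\mathbb{C})$ is one-dimensional (scalar matrices), so the center of $\mathbb{C}_\alpha[G]$ has dimension exactly $N_{\mathrm{rep}}^{(\alpha)}$.

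Next I would identify the center directly: an element $\sum_g c_g |g\rangle$ is central iff it commutes with every $|g'\rangle$, and solving this (the computation is essentially the content of Lemma~\ref{structure}) shows the coefficient function must be constant on conjugacy classes and must vanish on non-$\alpha$-regular classes; conversely each $|\Psi_{g_I}\rangle$ for $\alpha$-regular $g_I$ is central and nonzero. Since the $|\Psi_{g_I}\rangle$ have disjoint supports they are linearly independent, so they form a basis of the center, giving $\dim Z(\mathbb{C}_\alpha[G]) = |C^{(\alpha)}(G)|$. Comparing the two dimension counts yields \eqref{NrNC}. Alternatively — and this may be the cleaner route given what the excerpt has already built — one can avoid Wedderburn and argue purely with characters: the square character table $[\chi_{U_A}(g_I)]$ has the same number of rows as columns because the orthogonality relations \eqref{orthogonalcharacter1} (rows orthogonal, giving the rows are independent, so $N_{\mathrm{rep}}^{(\alpha)}\le|C^{(\alpha)}(G)|$) and \eqref{orthogonalcharacter2} (columns orthogonal, giving $|C^{(\alpha)}(G)|\le N_{\mathrm{rep}}^{(\alpha)}$) pin down both inequalities.

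The main obstacle is being careful that the Wedderburn blocks (or equivalently the nonzero rows/columns of the character table) are in honest bijection with the $U_A$: one must rule out that some irreducible projective representation fails to appear in $\mathbb{C}_\alpha[G]$ or that two inequivalent ones land in the same block. This is exactly where the already-proven decomposition $U_{\mathrm{reg}}=\bigoplus_A d_A U_A$ and the dimension identity $\sum_A d_A^2=|G|$ do the work — they guarantee completeness of the list $\{U_A\}$. If I take the character-theoretic route instead, the only subtlety is that \eqref{orthogonalcharacter1} and \eqref{orthogonalcharacter2} together immediately force a square table, so there is essentially no obstacle beyond citing both orthogonality theorems; I would likely present this shorter argument as the main proof and mention the group-algebra/Wedderburn picture as the conceptual reason behind it.
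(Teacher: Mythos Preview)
Your proposal is correct. The character-theoretic route you identify at the end --- using \eqref{orthogonalcharacter1} to obtain $N_{\mathrm{rep}}^{(\alpha)}\le|C^{(\alpha)}(G)|$ and \eqref{orthogonalcharacter2} to obtain the reverse inequality --- is precisely the paper's proof; the Wedderburn/center-of-algebra computation you sketch first is a valid and conceptually illuminating alternative that the paper does not pursue.
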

\end{screen}

\begin{proof}
Equation \eqref{orthogonalcharacter1} implies the orthogonality of $N_{\mathrm{rep}}^{(\alpha)}$ vectors of dimension $|C^{(\alpha)}(G)|$ labeled by the indices $A=1,\dots,N_{\mathrm{rep}}^{(\alpha)}$:
\begin{align}
(\sqrt{|C_1|}\chi_{U_A}(g_1),\sqrt{|C_2|}\chi_{U_A}(g_2),\dots,\sqrt{|C_{|C^{(\alpha)}(G)|}|}\chi_{U_A}(g_{|C^{(\alpha)}(G)|})),
\end{align}
thus giving $|C^{(\alpha)}(G)|\geq N_{\mathrm{rep}}^{(\alpha)}$. Conversely, \eqref{orthogonalcharacter2} indicates the orthogonality of $|C^{(\alpha)}(G)|$ vectors of dimension $N_{\mathrm{rep}}^{(\alpha)}$ labeled by the indices $I=1,\dots,|C^{(\alpha)}(G)|$:
\begin{align}
(\chi_{U_1}(g_I),\chi_{U_2}(g_I),\dots,\chi_{U_{N_{\mathrm{rep}}^{(\alpha)}}}(g_I)),
\end{align}
yielding $|C^{(\alpha)}(G)|\leq N_{\mathrm{rep}}^{(\alpha)}$. Combining these inequalities, we obtain \eqref{NrNC}.
\end{proof}

\subsubsection{Method to Determine Irreducible Representations}
\label{irreduciblemethod}
When the order of group $G$ is large, it becomes difficult to enumerate all irreducible projective representations associated with $\alpha$. Here, we introduce a numerical trick to simplify this task \cite{SI,Yang_2021,2304.01827,PhysRevB.111.134407}.

First, generate a random $|G|$-dimensional Hermitian matrix $H_0$. Define a $|G|$-dimensional Hermitian matrix $H$ using the regular projective representation:
\begin{align}
H\coloneqq \frac{1}{|G|}\sum_{g\in G}U_{\mathrm{reg}}(g)H_0U_{\mathrm{reg}}(g)^\dagger.
\end{align}
Then, similarly to \eqref{UAMMUB}, for any $g'\in G$,
\begin{align}
U_{\mathrm{reg}}(g')H=HU_{\mathrm{reg}}(g').
\end{align}
Thus, $H$ is a Hamiltonian with symmetry $G$. Unless accidental degeneracy occurs, its eigenvalues split according to irreducible projective representations, with each representation $U_A$ producing a degeneracy of dimension $d_A$ repeated exactly $d_A$ times. If $\bm{\phi}_{n\kappa}$ ($\kappa$ distinguishes degeneracies) are eigenvectors corresponding to the $n$-th eigenvalue, defining
\begin{align}
[U_n(g)]_{\kappa\kappa'}\coloneqq\bm{\phi}_{n\kappa}^\dagger U(g)\bm{\phi}_{n\kappa'}
\end{align}
yields irreducible representations. By repeating this for all eigenvalues, all irreducible projective representations can be obtained.

\subsubsection{Central Extension}

Let $A$ be any Abelian group, and consider a 2-cocycle $\omega\in Z^2(G,A)$ taking values in $A$. 
Consider the direct product set $A \times G$ with elements denoted as $(a,g)$, where $a\in A$ and $g\in G$. Define the multiplication on this set by
\begin{align}
(a,g)(a',g')=(aa'\omega(g,g'),gg').
\end{align}
This defines a group $\hat{G}$ called the \textbf{central extension} of $G$. In this construction,
\begin{align}
1\rightarrow A
\xrightarrow{i:a\mapsto(a,e)}
\hat{G}
\xrightarrow{\pi:(a,g)\mapsto g}
G\rightarrow 1
\end{align}
forms a short exact sequence, with $i(A)\subset Z(\hat{G})$.

In fact, the projective representation of $G$ is understood as \emph{a linear representation} of a certain central extension $\hat{G}$. Let us consider a projective representation characterized by a 2-cocycle $\alpha \in Z^2(G,U(1))$. As shown in \eqref{omegad}, by appropriately adjusting the coboundary, we can assume without loss of generality that it takes values in $\mathbb{Z}_{|G|}\subset U(1)$. Then, take $A=\mathbb{Z}_{|G|}$ and $\omega = \alpha\in Z^2(G,A)$ to define the finite group $\hat{G}$.

Now consider the linear representation $\hat{U}$ of $\hat{G}$ satisfying $\hat{U}(a,e)=a\,\mathrm{id}$. Define $U(g)\coloneqq\hat{U}(1,g)$. Then,
\begin{align}
U(g)U(g')=\hat{U}(1,g)\hat{U}(1,g')&=\hat{U}(\alpha(g,g'),gg')=\alpha(g,g')\hat{U}(1,gg')=\alpha(g,g')U(gg').
\end{align}
Thus, $U$ provides a projective representation of $G$. Consequently, some results established previously, such as the orthogonality relations for characters and representations, follow immediately from known results for linear representations.
Moreover, the condition that the projective representation is trivial $\alpha\sim\alpha^{(0)}$ is equivalent to the existence of a group isomorphism $\phi\colon \hat{G}\to A\times G$ defined by $\phi(a,g)=(a\beta(g),g)$ with $\alpha = \delta\beta$.

\section{Bogomolov Multiplier}
\label{sec:Bogomolov}
We now discuss examples where every element $g \in G$ is $\alpha$-regular, but $\alpha$ is not a 2-coboundary. 

Let $G$ be a finite group and $\alpha$ a multiplier of $G$.  The following two conditions are equivalent: (i) For every abelian subgroup $A\subset G$, the restriction $\alpha|_{A\times A}\colon A\times A\to U(1)$ is a coboundary, (ii) For every pair of commuting elements $g,g'\in G$ (so $gg'=g'g$), one has $\alpha(g,g')=\alpha(g',g)$. The cohomology class containing such a cocycle $\alpha$ is called the Bogomolov multiplier \cite{Bogomolov,Moravec, Davydov2014} and is denoted $B(G)\subset H^{2}(G,U(1))$. By definition, $B(G)=0$ for any Abelian groups.

Historically, Mangold asserted that if every $g \in G$ is $\alpha$-regular, then $\alpha$ is a 2-coboundary (meaning $B(G)$ is trivial for any finite group) \cite{Higgs1989projective}. However, as demonstrated in Ref.~\onlinecite{Higgs1989projective}, a counterexample exists. 
Moravec \cite{Moravec} exhaustively investigated groups with $|G| \leq 64$, finding that the smallest order for which $B(G)$ is nontrivial is $|G| = 64$, and among 267 groups of order $|G| = 64$ (excluding isomorphic cases), nine groups have nontrivial $B(G)$.

If $[\alpha]\in B(G)$, the following relations must all hold:
\begin{itemize}
\item Every conjugacy class of $G$ is an $\alpha$-regular conjugacy class. Denote the number of these classes by $|C^{(\alpha)}(G)|=|C(G)|$.
\item Let $d_A$ be the dimension of irreducible projective representations $U_A$ of $G$ associated with $\alpha$. The number of distinct irreducible projective representations, $N_{\mathrm{rep}}^{(\alpha)}$, matches $|C(G)|$, and the relation \eqref{repdim} holds.
\item Let $d_A^{(0)}$ be the dimension of irreducible linear representations $U_A^{(0)}$ of $G$. The number of distinct irreducible linear representations, $N_{\mathrm{rep}}^{(0)}$, also matches $|C(G)|$, and the relation \eqref{repdim} similarly holds.
\end{itemize}

\subsection{Pollmann-Turner Example}
\label{sec:pollmann-turner}

We begin with the simplest constructed example. Pollmann and Turner discussed the following example to construct an SPT phase that cannot be detected by string-order~\cite{Pollmann2012detection}.

Consider the group defined by four generators $a,b,c_{1},c_{2}$ with the relations:
\begin{align}
\begin{split}
& a^{2}=b^{2}=c_{1}^{2}=c_{2}^{2}=1,\\
& [a,b]=[c_{1},c_{2}],\\
& [b,c_{1}]=[b,c_{2}]=1,\\
& [[G,G],G]=1,
\label{eq:pollmann_def}
\end{split}
\end{align}
where the commutator $[p,q]$ is defined by $[p,q]\coloneqq p q p^{-1} q^{-1}$. 
Any element $g \in G$ can be uniquely expressed as
\begin{align}
g=a^{n_a}b^{n_b}c_1^{\ell_1}c_2^{\ell_2}[a,b]^k[a,c_1]^{d_1}[a,c_2]^{d_2},
\end{align}
where $n_a, n_b, \ell_1, \ell_2, k, d_1, d_2$ take values in ${0,1}$. Thus, the group order is $|G|=128$. According to Ref.~\onlinecite{Moravec}, among 2328 groups of order $|G|=128$ (excluding isomorphic ones), 230 have nontrivial $B(G)$.
For simplicity, let us introduce notations for group elements 
\begin{align}
    x := [a,b]=[c_1,c_2],\quad y_1:= [a,c_1],\quad y_2:=[a,c_2]~.
\end{align}
This group is then regarded as a central extension
\begin{align}
    (\Z_2)^3\to G\to (\Z_2)^4~,
\end{align}
where $(\Z_2)^4$ is generated by $a,b,c_1,c_2$, while $(\Z_2)^3$ is generated by $x,y_1,y_2$.
The product of elements
\begin{align}
g=a^{n_a}b^{n_b}c_1^{\ell_1}c_2^{\ell_2}x^k y_1^{d_1}y_2^{d_2}, \quad
g'=a^{n_a'}b^{n_b'}c_1^{\ell_1'}c_2^{\ell_2'}x^{k'}y_1^{d_1'}y_2^{d_2'}
\label{eq:labelgroupelement_PT}
\end{align}
is calculated as
\begin{align}
gg'&=a^{n_a+n_a'}b^{n_b+n_b'}c_1^{\ell_1+\ell_1'}c_2^{\ell_2+\ell_2'}x^{k+k'+n'_an_b+\ell_1'\ell_2}y_1^{d_1+d_1'+n'_a\ell_1}y_2^{d_2+d_2'+n'_a\ell_2}.
\end{align}

\subsubsection{Multiplier system}
Define a multiplier system $\alpha\in Z^2(G,U(1))$ by
\begin{align}
\alpha(g,g')\coloneqq(-1)^{n_an_b'}.
\label{eq:BG Pollmann}
\end{align}
This is normalized and satisfies the cocycle condition:
\begin{align}
\alpha(g,g')\alpha(gg',g'')
=(-1)^{n_an_b'}(-1)^{(n_a+n_a')n_b''}
=(-1)^{n_a(n_b'+n_b'')}(-1)^{n'_an''_b}=\alpha(g,g'g'')\alpha(g',g'').
\end{align}

The commutativity condition $gg'=g'g$ is given by:
\begin{align}
&n_a'n_b+\ell_1'\ell_2=n_an'_b+\ell_1\ell_2',\notag\\
&n_a'\ell_1=n_a\ell_1', n_a'\ell_2=n_a\ell_2'\mod 2.
\end{align}
Under this condition, we find $(-1)^{n_a'n_b}=(-1)^{n_an_b'}$ so that all elements are $\alpha$-regular.

\subsubsection{Projective representations}
Using the method introduced in Section \ref{irreduciblemethod}, we find 32 irreducible projective representations of dimension 2:
\begin{align}
|C^{(\alpha)}(G)|=N_{\mathrm{rep}}^{(\alpha)}=32.
\end{align}
One can verify the relation \eqref{repdim} by
\begin{align}
\sum_{A=1}^{N_{\mathrm{rep}}^{(\alpha)}}d_A^2=2^2\times 32=|G|~.
\end{align}

\subsubsection{Linear representations}
Similarly, there are 16 one-dimensional representations, 12 two-dimensional representations, and 4 four-dimensional representations, thus:
\begin{align}
|C(G)|=N_{\mathrm{rep}}^{(0)}=16+12+4=32,
\end{align}
and again satisfies \eqref{repdim} by
\begin{align}
\sum_{A=1}^{N_{\mathrm{rep}}^{(\alpha)}}(d_A^{(0)})^2&=1^2\times 16+2^2\times 12+4^2\times 4=|G|~.
\end{align}

\subsection{Example of Minimal Order}
We consider an example with $|G|=64$, the smallest order for which $B(G)$ is nontrivial.
The group is expressed as a central extension
\begin{align}
    \Z^{(x)}_2\to G \to (\Z^{(a)}_2\times\Z^{(b)}_2)\ltimes \Z^{(c)}_8~.
\end{align}
where the generators of $(\Z_2\times\Z_2)\ltimes \Z_8$ is represented by $a,b,c$ with
\begin{align}
    a^2=b^2=c^8=1~,
\end{align}
with $a,b$ generating $\Z_2\times\Z_2$, $c$ generating $\Z_8$. The $\Z_2\times\Z_2$ action on $\Z_8$ is defined as
\begin{align}
    ac^\ell a = c^{3\ell}~, \quad bc^\ell b = c^{5\ell}~, \quad 0\le \ell\le 7~.
\end{align}
The generator of $\Z_2^{(x)}$ is given by $x\in Z(G)$ with $x^2=1$, satisfying
\begin{align}
    [a,b] = x~,
\end{align}
which characterizes the nontrivial central extension.
Generic group elements $ g\in G$ can be presented by a product
\begin{align}
g=a^{n}b^{m}c^{\ell}x^{k},
\end{align}
with $n,m,k=0,1$ and $\ell=0,1,\dots,7$, thus $|G|=64$. The multiplication law of $g,g'\in G$ is then defined as
\begin{align}
\begin{split}
gg'&=a^{n+n'}b^{m+m'}c^{3^{n'} 5^{m'}\ell + \ell'}x^{k+k'+n'm} \\
&= a^{n+n'}b^{m+m'}c^{\ell+\ell'+2\ell (n'+2 m')}x^{k+k'+n'm}~,
\end{split}
\end{align}
which completes the definition of $G$.
Let us describe the condition that $g,g'\in G$ commute with each other: 
\begin{align}
gg'=g'g\quad\Leftrightarrow\quad\ell n'=\ell' n\mod 4,\quad n'm=nm', \ell'm=\ell m'\mod2.\label{ggpgpg}
\end{align}

\subsubsection{Multiplier System}
Define a multiplier system on this group as:
\begin{align}
\alpha(g,g')\coloneqq(-1)^{m\ell'}.
\end{align}
This satisfies the cocycle condition, and all conjugacy classes are $\alpha$-regular due to the last equation in \eqref{ggpgpg}.

\subsubsection{Projective Representations}
This group has 16 two-dimensional irreducible representations and no one-dimensional representations, implying $\alpha$ represents a nontrivial element of $B(G)\cong \mathbb{Z}_2$. Hence,
\begin{align}
|C^{(\alpha)}(G)|=N_{\mathrm{rep}}^{(\alpha)}=16,
\end{align}
and the sum of squares becomes $|G|$:
\begin{align}
\sum_{A=1}^{N_{\mathrm{rep}}^{(\alpha)}}d_A^2=2^2\times 16=|G|.
\end{align}

\subsubsection{Linear Representations}
The linear representations consist of 8 one-dimensional, 6 two-dimensional, and 2 four-dimensional representations, thus:
\begin{align}
|C(G)|=N_{\mathrm{rep}}^{(0)}=8+6+2=16,
\end{align}
and the sum of squares becomes $|G|$:
\begin{align}
\sum_{A=1}^{N_{\mathrm{rep}}^{(\alpha)}}(d_A^{(0)})^2=1^2\times 8+2^2\times 6+4^2\times 2=|G|.
\end{align}

\section{Physical Applications of Bogomolov Multiplier}
\label{sec:physics}
\subsection{Review: Absence of String Order Parameters in (1+1)D SPT Phases}

{An SPT phase} cannot be characterized by local order parameters. Instead, in many cases a (1+1)D $G$ SPT phase is characterized through a non-local order parameter called a \textit{string order parameter}~\cite{Pollmann2010spectrum, Turner2011onedimensionalphase, PerezGarcia2008string, Haegeman2012orderparameter}. Suppose that {the} $G$ symmetry acts onsite. The idea is to consider a symmetry operator $U_g$ with $g\in G$ {supported on} an open region $\Sigma$ {of} length $l$, 
{with boundaries labeled $L$ and $R$.}
The string order parameter is defined as the ground state expectation value of the operator in the form of
\begin{align}
    \mathcal{S}_g(\Sigma,V_g^{(L)}, V_g^{(R)}) = \bra{\text{GS}} V_g^{(L)} U_g(\Sigma) V_g^{(R)}\ket{\text{GS}}~,
\end{align}
where $V_g^{(L)}, V_g^{(R)}$ are local unitaries supported at the ends of $\Sigma$. In general, there exists a choice of local operators $V_g^{(L)}, V_g^{(R)}$ {such that} the above expectation {value remains} nonzero in the limit of $l\to \infty.$ 

The operator $V_g^{(L)} U_g(\Sigma)V_g^{(R)}$ creates a pair of $g\in G$ symmetry defects when acting on a ground state.
In principle, there is freedom in the choice of terminations $V_g^{(L)}, V_g^{(R)}$, which leads to different choices of symmetry defects for a given symmetry operator $U_g$ on a closed 1d chain.  The above choice of $V_g^{(L)}, V_g^{(R)}${, which yields a} non-vanishing string order parameter, corresponds to a specific, preferred choice of {a} $g$ symmetry defect, such that the defect does not create excitations.

With the above choices of $V_g^{(L)}, V_g^{(R)}$, a large class of (1+1)D $G$ SPT phases {can be} characterized {by the} electric charge under $Z(g)$ carried by the symmetry defects $V_g^{(L)}, V_g^{(R)}$,
\begin{align}
    U_hV_g^{(L)} U_h^\dagger = \varphi_g(h) V_g^{(L)}~, \quad  U_hV_g^{(R)} U_h^\dagger = \varphi_g(h)^* V_g^{(R)}~.
\end{align}
When a (1+1)D SPT phase is characterized by $\alpha\in H^2(G,U(1))$, an electric charge carried by $V_g^{(L)}$ (or $V_g^{(R)\dagger}$) under $Z(g)$ is given by $\varphi_g \in \text{Rep}(Z(g))$ introduced in \eqref{eq:slant}; $\varphi_g(h) =\alpha(g,h)/\alpha(h,g)$~\cite{Pollmann2012detection}.

Let us consider an example of {a} string order parameter in {a} (1+1)D SPT phase with $\Z_2\times \Z_2$ symmetry. 
The model is given by a cluster state with a qubit at each site,
\begin{align}
    H_{\text{SPT}} = -\sum_{j} Z_{j-1}X_jZ_{j+1}~,
\end{align}
with $\Z_2\times\Z_2$ symmetry
\begin{align}
    \Z_2^{(\text{even})}: \prod_{j\in\text{even}}X_j, \quad \Z_2^{(\text{odd})}: \prod_{j\in\text{odd}}X_j~.
\end{align}
This SPT Hamiltonian is obtained by applying a finite-depth circuit called an SPT entangler, on a trivial symmetric product state;
\begin{align}
    H_{\text{SPT}} = U_{CZ} H_0 U^\dagger_{CZ}~,
\end{align}
with
\begin{align}
    H_{0} = -\sum_{j} X_j~, \quad U_{CZ} = \prod_{j} CZ_{j,j+1}~,
\end{align}
where $CZ$ is a Controlled-$Z$ operator.
The string order parameters of $H_{\text{SPT}}$ for the $\Z_2\times\Z_2$ symmetry are simply obtained by products of local stabilizer Hamiltonians, which take the form
\begin{align}
    \mathcal{S}_{\text{even}} = Z^{(L_{\text{odd}})}\left(\prod_{j\in\Sigma_{\text{even}}} X_j \right) Z^{(R_{\text{odd}})}~, \quad \mathcal{S}_{\text{odd}} = Z^{(L_{\text{even}})}\left(\prod_{j\in\Sigma_{\text{odd}}} X_j \right) Z^{(R_{\text{even}})}~.
\end{align}
These operators leave the ground state invariant.
The string order parameter $\mathcal{S}_{\text{even}}$ has the $Z$ operator at the odd site $L_{\text{odd}}$, and therefore carries electric charge under $\Z_2^{\text{odd}}$, and vice versa.

\subsubsection{(1+1)D SPT phase for Bogomolov multiplier}
When {a} (1+1)D $G$ SPT phase
is characterized by a cocycle $\alpha$ belonging to the Bogomolov multiplier $B(G)$, the string order parameter fails to detect it. This is because every $g\in G$ is $\alpha$-regular, implying that $\varphi_g$ is trivial; consequently, the symmetry defects $V_L, V_R$ do not carry an electric charge. The absence of {a} string order parameter was first pointed out in Ref.~\onlinecite{Pollmann2012detection} using the symmetry group given in Sec.~\ref{sec:pollmann-turner}.

We now explicitly construct a lattice model for the SPT phase with $\alpha \in B(G)$. We use the choice of $G$ discussed in Sec.~\ref{sec:pollmann-turner}.
Let us consider a 1d chain with $|G|$ dimensional local Hilbert space on each site. 
The local Hilbert space at each site is spanned by basis states $\ket{g}$ labeled by group elements $g\in G$.
We define the operators
\begin{align}
    \overrightarrow{X}_g\ket{h} = \ket{gh}~, \quad \overleftarrow{X}_g\ket{h} = \ket{hg^{-1}}~.
\end{align}
A trivial gapped phase with $G$ symmetry is realized by the Hamiltonian
\begin{align}
    H_0 = -\sum_{j}\sum_{g\in G} \overrightarrow{X}_g^{(j)}~,
    \label{eq:trivialHam}
\end{align}
with onsite $G$ symmetry
\begin{align}
    U_g = \prod_{j} \overrightarrow{X}_g^{(j)}~.
\end{align}

As in the case of $\Z_2\times\Z_2$ cluster state, the SPT phase is obtained by conjugating the trivial Hamiltonian $H_0$ by an SPT entangler. The SPT entangler for $\alpha\in B(G)$ is given by the unitary
\begin{align}
    U_{CZ} = \prod_j CZ^{j,j}_{a,b} CZ^{j,j+1}_{a,b}~,
\end{align}
where each unitary $CZ_{j,j}^{a,b}, CZ_{j,j+1}^{a,b}$ are Controlled-Z-type operators
\begin{align}
    CZ_{j,j}^{a,b}\ket{g_j} = (-1)^{n_a(g_j) n_b(g_j)}\ket{g_j}~, \quad CZ_{j,j+1}^{a,b}\ket{g_j,g_{j+1}} = (-1)^{n_a(g_j) n_b(g_{j+1})}\ket{g_j,g_{j+1}}~,
\end{align}
and 
\begin{align}
    Z^{(j)}_{a}\ket{g_j} = (-1)^{n_a(g_j)}\ket{g_j}~, \quad Z^{(j)}_{b}\ket{g_j} = (-1)^{n_b(g_j)}\ket{g_j}~,
\end{align}
where $a,b$ label the generators of $G$ defined in \eqref{eq:pollmann_def}, and $n_a(g), n_b(g)$ are mod 2 numbers that count the number of $a,b$ {appearing in the word representation of $g \in G$ using the generators} $a,b,c_1,c_2,x,y_1,y_2$ (see e.g., \eqref{eq:labelgroupelement_PT}).

The SPT Hamiltonian is then given by
\begin{align}
    H_{\text{SPT}} = U_{CZ} H_0 U_{CZ}^\dagger =  -\sum_j\sum_{g\in G}(Z_a^{(j-1)})^{n_b(g)}(Z_b^{(j)})^{n_a(g)}\overrightarrow{X}_g^{(j)} (Z_a^{(j)})^{n_b(g)}(Z_b^{(j+1)})^{n_a(g)}~,
    \label{eq:SPTHam}
\end{align}
which again has the onsite $G$ symmetry $U_g$. 
The string order parameter is obtained by conjugating the symmetry operator $U_g$ on a region $\Sigma$ with the SPT entangler $U_{CZ}$. 
This results in a string order parameter localized at the right end of 
$U_g(\Sigma)$. Namely, $U_{CZ}U_g(\Sigma) U_{CZ}^\dagger = V_g^{(L)} U_g(\Sigma) V_g^{(R)}$ with $V_g$ the operator at the ends of $\Sigma$. For instance, the operator at the right end is given by
\begin{align}
    V_{g}^{(R)} = (Z_a^{(R-1)})^{n_b(g)} (Z_b^{(R)})^{n_a(g)}~.
\end{align}
The electric charge of $V_{g}^{(R)}$ under $h\in Z(g)$ is given by $\alpha(g,h)/\alpha(h,g)$, with $\alpha(g,h)=(-1)^{n_a(g)n_b(h)}\in B(G)$ introduced in \eqref{eq:BG Pollmann}. Since $\alpha$ belongs to {the} Bogomolov multiplier, this electric charge is trivial. 
The string order parameters thus do not carry {any nontrivial electric charge} under the symmetry.

{Although the string order parameters carry only trivial charges}, $H_{\text{SPT}}$ realizes a nontrivial SPT phase with  protected edge modes. 
To demonstrate this explicitly, we introduce a boundary to the SPT Hamiltonian while preserving the $G$ symmetry,
\begin{align}
\begin{split}
    H^{\text{(open)}}_{\text{SPT}} = &-\sum_{2\le j\le L-1}\sum_{g\in G} (Z_a^{(j-1)})^{n_b(g)}(Z_b^{(j)})^{n_a(g)}\overrightarrow{X}_g^{(j)} (Z_a^{(j)})^{n_b(g)}(Z_b^{(j+1)})^{n_a(g)}~\\
    &- \sum_{\substack{g\in G \\ n_b(g)= 0}}(Z_b^{(1)})^{n_a(g)}\overrightarrow{X}_g^{(1)} (Z_b^{(2)})^{n_a(g)} - \sum_{\substack{g\in G \\ n_a(g)= 0}}
    (Z_a^{(L-1)})^{n_b(g)}\overrightarrow{X}_g^{(L)} (Z_a^{(L)})^{n_b(g)}~,
    \label{eq:SPTHam boundry}
    \end{split}
\end{align}
where 
the last two terms are edge Hamiltonians that commute with the bulk Hamiltonian.
We can find a set of boundary operators that commute with the Hamiltonian, {and thus} act within the ground state Hilbert space,
\begin{align}
\begin{split}
    \eta^{(L)}_g &:= (Z_a^{(1)})^{n_b(g)} \overleftarrow{X}_g^{(1)}(Z_b^{(2)})^{n_a(g)}~, \\
    \eta^{(R)}_g &:= (Z_a^{(L-1)})^{n_b(g)} \overleftarrow{X}_g^{(L)}(Z_b^{(L)})^{n_a(g)}~,
   \end{split}
\end{align}
where $\eta_g^{(L)}, \eta_g^{(R)}$ correspond to left and right edge modes.
These operators follow a projective representation of $G$
\begin{align}
    \eta^{(L)}_g \eta^{(L)}_h = \eta^{(L)}_{gh} \alpha(g,h)~, \quad \eta^{(R)}_g \eta^{(R)}_h = \eta^{(R)}_{gh} \alpha(g,h)~,
\end{align}
with $\alpha(g,h) = (-1)^{n_a(g)n_b(h)}\in B(G)$. 
This implies the presence of nontrivial edge modes {that carry a projective representation of $G$ associated with the cocycle $\alpha \in B(G)$.}
Indeed, the ground state Hilbert space of $H^{\text{(open)}}_{\text{SPT}}$ has a structure of $(\mathbb{C}^2)^{(L)}\otimes (\mathbb{C}^2)^{(R)}$, where $\eta_g^{(L/R)}$ acts on $(\mathbb{C}^2)^{(L/R)}$ by an irreducible projective representation, leading to 4-fold ground state degeneracy.

\subsection{Distinct $\text{Rep}(G)$ Symmetry Broken Phases in (1+1)D}
\label{sec:SSB}

\subsubsection{Maximally $\text{Rep}(G)$ broken phases: Lattice models}

Let us consider a (1+1)D $G$ SPT phase characterized by $\alpha\in H^2(G,U(1))$. We gauge the $G$ symmetry of the SPT phase, resulting in a gapped phase with $\text{Rep}(G)$ symmetry. 
If the cocycle $\alpha$ belongs to the Bogomolov multiplier $B(G)$, the $\text{Rep}(G)$ symmetry is maximally broken~\cite{kobayashi2025soft}; all $\text{Rep}(G)$ symmetry operators act faithfully on the ground state Hilbert space, and the ground states have maximal degeneracy among all possible gapped phases. In Sec.~\ref{sec:soft} we discuss the classification of $\text{Rep}(G)$ maximally broken phase using symmetry TQFT.

Let us describe explicit lattice models for distinct maximally $\text{Rep}(G)$ broken phases.
The models are obtained by gauging the $G$ symmetry of either the trivial or SPT Hamiltonian in \eqref{eq:trivialHam}, \eqref{eq:SPTHam}. The $G$ gauging is performed by a Kramers-Wannier (KW)-type transformation of operators~\cite{fechisin2023noninvertible}:
\begin{align}
    \overrightarrow{X}^{(j)}_g\mapsto  \overleftarrow{X}_g^{(j)} \overrightarrow{X}_g^{(j+1)}~, \quad Z_a^{(j)} Z_a^{(j+1)} \mapsto Z_a^{(j+1)}~,\quad Z_b^{(j)} Z_b^{(j+1)} \mapsto Z_b^{(j+1)}~.
\end{align}We 

{Applying this KW transformation to Eqs.~\eqref{eq:trivialHam} and \eqref{eq:SPTHam} yields the following Hamiltonians:}
\begin{align}
    H^{(0)}_{\text{SSB}} = -\sum_{j}\sum_{g\in G}\overleftarrow{X}_g^{(j)} \overrightarrow{X}_g^{(j+1)}~,
\end{align}

\begin{align}
\begin{split}
    H^{(\alpha)}_{\text{SSB}} &=     -\sum_{j}\sum_{g\in G}\overleftarrow{X}_g^{(j)}  (Z_a^{(j)})^{n_b(g)}(Z_b^{(j+1)})^{n_a(g)} \overrightarrow{X}_g^{(j+1)}~.
\end{split}
\end{align}
These models have a non-invertible $\text{Rep}(G)$ symmetry given by a matrix product operator (MPO) defined as follows.
Let $\Gamma\in \text{Rep}(G)$ {be} an irreducible representation of $G$, with dimension $d_\Gamma$. Then we define a local MPO with bond dimension $d_\Gamma$ as~\cite{fechisin2023noninvertible}
\begin{align}
    Z_{\Gamma} = \sum_{g\in G} \Gamma(g) \otimes |g\rangle \langle g|~,
\end{align}
where $\Gamma(g)$ is a representation matrix acting on the bond Hilbert space. See Fig.~\ref{fig:MPO}.

The corresponding symmetry operator is then given by the MPO formed by $Z_\Gamma$,
\begin{align}
    W_{\Gamma} = \text{tr}\left[\prod_{j}Z^{(j)}_\Gamma \right] = \sum_{\{g_j\}} \text{tr}(\Gamma(g_1\dots g_L))\ket{g_1\dots g_L}\bra{g_1\dots g_L}~,
\end{align}
where the last expression is for a periodic chain with length $L$. This operator indeed satisfies the $\text{Rep}(G)$ fusion rule,
\begin{align}
    W_\Gamma W_{\Gamma'} = W_{\Gamma\otimes \Gamma'} = N_{\Gamma,\Gamma'}^{\Gamma''} W_{\Gamma''}~,
\end{align}
where $N_{\Gamma,\Gamma'}^{\Gamma''}$ is the fusion multiplicity of irreducible representations. 

The symmetry operator commutes with the local Hamiltonian of $H^{(0)}_{\text{SSB}}$
\begin{align}
    W_\Gamma \overleftarrow{X}_g^{(j)} \overrightarrow{X}_g^{(j+1)} = \overleftarrow{X}_g^{(j)} \overrightarrow{X}_g^{(j+1)} W_\Gamma~,
\end{align} 
which can be seen by
\begin{align}
\begin{split}
    W_\Gamma \overleftarrow{X}_g^{(j)}\overrightarrow{X}_g^{(j+1)}\ket{h_1,\dots h_j,h_{j+1},\dots h_L} &= W_\Gamma \ket{h_1,\dots h_jg^{-1},gh_{j+1},\dots h_L} \\
    &=  \text{tr}(\Gamma(h_1\dots h_L)) \ket{h_1,\dots h_jg^{-1},gh_{j+1},\dots h_L} \\
    &=   \overleftarrow{X}_g^{(j)}\overrightarrow{X}_g^{(j+1)}W_\Gamma\ket{h_1,\dots h_j,h_{j+1},\dots h_L}~.
    \end{split}
\end{align}
The symmetry $W_\Gamma$ also commutes with local Hamiltonians of $H^{(\alpha)}_{\text{SSB}}$. 

\begin{figure}[t]
    \centering
    \includegraphics[width=0.7\textwidth]{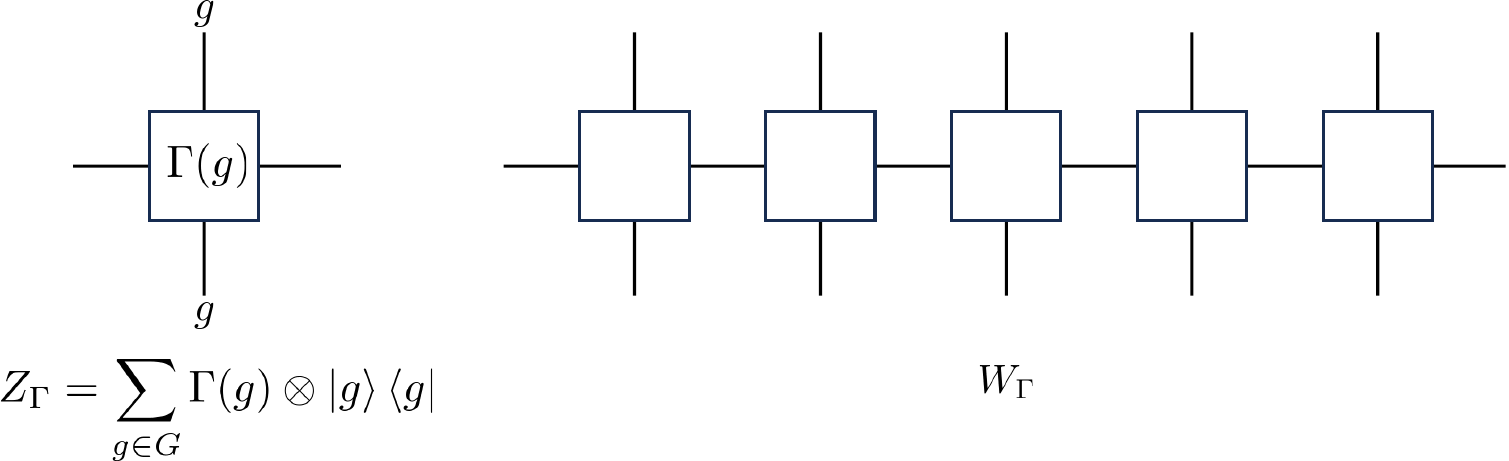}
    \caption{The non-invertible $\text{Rep}(G)$ symmetry is generated by an MPO $Z_\Gamma$, which gives a representation matrix $\Gamma(g)$ acting on the bond Hilbert space.}
    \label{fig:MPO}
\end{figure}

\subsubsection{Ground states and symmetry actions}
\label{sec:groundstates_SSB}

Let us study the ground states of the above $\text{Rep}(G)$ SSB phases and the action of the symmetry operator on them. 
{Among several possible approaches, we adopt the following method,} which can be readily generalized to more involved cases discussed later. We transform the SSB Hamiltonians into the short-range entangled (SRE) ones by a sequential circuit~\cite{Chen2024sequential}, which is a linear depth circuit of local unitaries. Since its depth {scales linearly with the} system size instead of constant, it can map long-range entangled SSB states into SRE states, and a local operator into a nonlocal one.

To describe the sequential circuit, let us introduce the following Controlled-$\overrightarrow{X}$ operator
\begin{align}
    (C\overrightarrow{X})_{i,j} \ket{g_i,g_j} = \ket{g_i,g_ig_j}~.
\end{align}
This operator acts on the local terms of the Hamiltonian $H_{\text{SSB}}^{(0)}$ as follows:
\begin{align}
\begin{split}
    (C\overrightarrow{X})_{i,j} \overrightarrow{X}_g^{(i)} (C\overrightarrow{X})^\dagger_{i,j} &= \overrightarrow{X}_g^{(i)}\overrightarrow{X}_g^{(j)}~,\\
    (C\overrightarrow{X})_{i,j} \overleftarrow{X}_g^{(i)}\overrightarrow{X}_g^{(j)}(C\overrightarrow{X})^\dagger_{i,j} &= \overleftarrow{X}_g^{(i)}~, \\
    (C\overrightarrow{X})_{i,j} \overleftarrow{X}_g^{(j)}(C\overrightarrow{X})^\dagger_{i,j} &= \overleftarrow{X}_g^{(j)}~.
\end{split}
\end{align}
In a periodic chain with length $L$, let us consider the following sequential circuit
\begin{align}
    U_{CX} =(C\overrightarrow{X})_{L-1, L}\dots (C\overrightarrow{X})_{2,3}(C\overrightarrow{X})_{1,2}~.
    \label{eq:UCX}
\end{align}

By using the transformation rule of local Hamiltonians, one can see that $H_{\text{SSB}}^{(0)}$ is transformed as
\begin{align}
    U_{CX} H_{\text{SSB}}^{(0)} U_{CX}^\dagger = -\sum_{g\in G}(\overleftarrow{X}^{(L)}_g\overrightarrow{X}^{(L)}_g)\prod_{j=1}^{L-1} \overrightarrow{X}^{(j)}_g - \sum_{j=1}^{L-1} \sum_{g\in G} \overleftarrow{X}^{(j)}_g~, 
\end{align}
where the local Hamiltonian $\overleftarrow{X}_g^{(L)} \overrightarrow{X}_g^{(1)}$ is transformed into the first nonlocal term, while the other Hamiltonians are transformed into onsite operators. Since the {transformed} Hamiltonian stabilizes a trivial product state in the bulk, the ground state after the sequential transformation becomes a product state in the form of
\begin{align}
    U_{CX}\ket{\text{GS}^{(0)}} = \bigotimes_{j=1}^{L-1} \left(\sum_{g_j\in G}\ket{g_j}\right) \otimes \ket{\text{GS}^{(0)}}_L~,
\end{align}
where $\ket{\text{GS}^{(0)}}_L$ is a local state at the site $L$ which minimizes the Hamiltonian $\sum_{g\in G}(\overleftarrow{X}^{(L)}_g\overrightarrow{X}^{(L)}_g)$. 
Such a local ground state is labeled by a conjugacy class $C\in C(G)$. For $C\in C(G)$, it is given by
\begin{align}
    \ket{\text{GS}^{(0)}(C)}_L = \sum_{g\in C} \ket{g}~.
\end{align}
Therefore the ground state degeneracy of $H_{\text{SSB}}^{(0)}$ is given by $|C(G)|=32$. We therefore label the ground states of $H^{(0)}_{\text{SSB}}$ by a conjugacy class $\ket{\text{GS}^{(0)}(C)}$.

The symmetry operator is mapped into an onsite operator at the site $L$,
\begin{align}
    U_{CX} W_\Gamma U_{CX}^\dagger = \sum_{g_L} \text{tr}(\Gamma(g_L))\ket{g_L}\bra{g_L}~.
\end{align}
Therefore, the symmetry action on each ground state is given by a group character
\begin{align}
    W_\Gamma \ket{\text{GS}^{(0)}(C)} = \chi_\Gamma(C)\ket{\text{GS}^{(0)}(C)}~.
\end{align}
This generalizes the symmetric cat state of SSB phases with finite invertible symmetries.

The same sequential circuit can be used to trivialize the other SSB Hamiltonian $H^{(\alpha)}_{\text{SSB}}$ associated with the nontrivial element of Bogomolov multiplier $\alpha\in B(G)$. The circuit transforms the local Hamiltonians in the following manner:

\begin{align}
\begin{split}
    (C\overrightarrow{X})_{i,j} (Z_b^{n_a(g)}\overrightarrow{X}_g)^{(i)} (C\overrightarrow{X})^\dagger_{i,j} &= (Z_b^{n_a(g)}\overrightarrow{X}_g)^{(i)}\overrightarrow{X}_g^{(j)}~,\\
    (C\overrightarrow{X})_{i,j} (\overleftarrow{X}_g Z_a^{n_b(g)})^{(i)}(Z_b^{n_a(g)}\overrightarrow{X}_g)^{(j)}(C\overrightarrow{X})^\dagger_{i,j} &= (Z_b^{n_a(g)}\overleftarrow{X}_gZ_a^{n_b(g)})^{(i)}(Z_b^{n_a(g)})^{(j)}~,\\
    (C\overrightarrow{X})_{i,j} (\overleftarrow{X}_g Z_a^{n_b(g)})^{(j)}(C\overrightarrow{X})^\dagger_{i,j} &=(Z_a^{n_b(g)})^{(i)}(\overleftarrow{X}_g Z_a^{n_b(g)})^{(j)}~.\\
\end{split}
\end{align}
By the sequential circuit $U_{CX}$ followed by a constant-depth circuit
\begin{align}
    U_{CZ} = \prod_{j=1}^{L-1}(CZ)^{a,b}_{j,j}
(CZ)^{a,b}_{j,j+1} ~,
\label{eq:UCZ}
\end{align}
the Hamiltonian $H_{\text{SSB}}^{(\alpha)}$ is transformed into
\begin{align}
    U_{CZ}U_{CX} H_{\text{SSB}}^{(\alpha)} U_{CX}^\dagger U_{CZ}^\dagger = -\sum_{g\in G}(\overleftarrow{X}_g\overrightarrow{X}_gZ_b^{n_a(g)}Z_a^{n_b(g)} )^{(L)}\prod_{j=1}^{L-1} \overrightarrow{X}^{(j)}_g - \sum_{j=1}^{L-1} \sum_{g\in G} \overleftarrow{X}^{(j)}_g~.
\end{align}
The bulk Hamiltonian again becomes onsite after the sequential transformation, so the ground state is again expressed as a product state
\begin{align}
    U_{CZ}U_{CX}\ket{\text{GS}^{(\alpha)}} = \bigotimes_{j=1}^{L-1} \left(\sum_{g_j\in G}\ket{g_j}\right) \otimes \ket{\text{GS}^{(\alpha)}}_L~,
\end{align}
where $\ket{\text{GS}^{(\alpha)}}_L$ is a ground state of a single site Hamiltonian $\sum_{g\in G}(\overleftarrow{X}_g\overrightarrow{X}_gZ_b^{n_a(g)}Z_a^{n_b(g)} )^{(L)}$.

To count the number of ground states $\ket{\text{GS}^{(\alpha)}}_L$, it is  instructive to regard the local Hamiltonian $\rho(g):= (\overleftarrow{X}_g\overrightarrow{X}_gZ_b^{n_a(g)}Z_a^{n_b(g)} )$ as a $|G|$-dimensional linear representation of $G$ acting on a single site Hilbert space. From this perspective, the number of ground states $d$ is the number of trivial representations contained in $\rho$ under decomposition into irreducible representations.
This can be computed using the orthogonality of characters,
\begin{align}
    \chi(g)= \text{tr}(\rho(g)) = \sum_{h\in G}\langle ghg^{-1}|h\rangle \frac{\alpha(g,h)}{\alpha(h,g)}=\sum_{h\in G}\delta_{gh,hg}\frac{\alpha(g,h)}{\alpha(h,g)}~,
    \label{eq:character_rho}
\end{align}
where $\alpha(g,h)= (-1)^{n_a(g)n_b(h)}\in B(G)$. 
The number of trivial representations $d$ is then simply given by the sum of characters
\begin{align}
\begin{split}
    d &= \frac{1}{|G|}\sum_{g\in G}\chi(g) = \sum_{g,h\in G}\delta_{gh,hg}\frac{\alpha(g,h)}{\alpha(h,g)} = |C^{(\alpha)}(G)|~,
    \label{eq:d1}
    \end{split}
\end{align}
where we applied Theorem \ref{thm:regclasscount}, and  $|C^{(\alpha)}(G)|$ is the number of $\alpha$-regular conjugacy class. Since $\alpha\in B(G)$ we have $|C^{(\alpha}(G)|=|C(G)|$, therefore the ground state degeneracy is again given by $|C(G)|=32$. \footnote{In general, if we gauge $G$ symmetry of a (1+1)D SPT phase with $\alpha\in H^2(G,U(1))$, the resulting theory carries ground state degeneracy given by $d=|C^{(\alpha)}(G)|$, where $|C^{(\alpha)}(G)|$ is the number of $\alpha$-regular conjugacy classes. This counts the number of charge neutral states in $G$ gauge theory.} 

In fact, there is again a direct correspondence between conjugacy classes $C\in C(G)$ and ground states. Let us take $C\in C(G)$, and pick a group element $g_1\in C$. Generic elements $g_j\in C$ with $1\le j\le |C|$ can be expressed as $g_j = k_j g_0 k_j^{-1}$, using some set of group elements $\{k_j\}$. Then the ground state is labeled by $C$ and given by
\begin{align}
    \ket{\text{GS}^{(\alpha)}(C)}_L = \sum_{j=1}^{|C|} \frac{\alpha(k_j,g_0)}{\alpha(g_0,k_j)}\ket{g_j}~. 
    \label{eq:state labeled by C}
\end{align}
Note that $k_j$ can be redefined as $k_j h$ with $h\in Z(g)$ and it shifts the phase factor by $\alpha(h,g_0)/\alpha(g_0,h)$, but since $\alpha\in B(G)$ the above state is independent of such choices.

The symmetry operator is again transformed into an onsite operator at the site $L$,
\begin{align}
    U_{CZ}U_{CX} W_\Gamma U_{CX}^\dagger U_{CZ}^{\dagger} =  \sum_{g_L} \text{tr}(\Gamma(g_L))\ket{g_L}\bra{g_L}~.
\end{align}
Therefore, the symmetry action on ground states {is} again given by
\begin{align}
    W_\Gamma \ket{\text{GS}^{(\alpha)}(C)} = \chi_\Gamma(C)\ket{\text{GS}^{(\alpha)}(C)}~.
\end{align}
This implies that two Hamiltonians $H_{\text{SSB}}^{(0)}$ and $H_{\text{SSB}}^{(\alpha)}$ have the same symmetry action on the low energy Hilbert space, therefore exhibit the same symmetry breaking pattern. In particular, the $\text{Rep}(G)$ symmetry is maximally broken in both phases. We summarize the properties of both SSB phases in Table \ref{tab:SSBHamiltonians}.

\begin{table}[htbp]
  \centering
        \begin{tabular}{c||c|c}
            & $H_{\text{SSB}}^{(0)}$ & $H_{\text{SSB}}^{(\alpha)}$  \\
          \hline
          & & \\
          $\text{Rep}(G)$ symmetry & $W_{\Gamma} = \text{tr}\left[\prod_{j}Z^{(j)}_\Gamma \right]$ &  $W_{\Gamma} = \text{tr}\left[\prod_{j}Z^{(j)}_\Gamma \right]$ \\
          & & \\
        \hline 
          Ground state degeneracy (periodic) & $|C(G)|=32$ & $|C^{(\alpha)}(G)|=32$ \\
          \hline
          & & \\
        Symmetry actions on ground states (periodic) & 
          $W_\Gamma \ket{\text{GS}^{(0)}(C)} = \chi_\Gamma(C)\ket{\text{GS}^{(0)}(C)}$ & $ W_\Gamma \ket{\text{GS}^{(\alpha)}(C)} = \chi_\Gamma(C)\ket{\text{GS}^{(\alpha)}(C)}$ \\
          & & \\
          \hline
          & & \\
          Local order parameters & $O^{(0)}_C = \sum_{g\in C} \overrightarrow{X}_g$ & $O^{(\alpha)}_C = \sum_{g\in C} (Z_b)^{n_a(g)}\overrightarrow{X}_g$ \\
          & & \\
          \hline
          & & \\
          Fusion rule of order parameters & $O^{(0)}_C\times O^{(0)}_{C'} = \sum_{C''\in C(G)} N^{C''}_{C,C'} O^{(0)}_{C''}$ & $O^{(\alpha)}_C\times O^{(\alpha)}_{C'} = \sum_{C''\in C(G)} N^{C''}_{C,C'} \alpha'(C,C') O^{(\alpha)}_{C''}$ \\
          & & \\
          \hline
          Ground state degeneracy (open) & $|G|=128$ & $|G|=128$ \\    
          \hline
          & & \\
          Symmetry actions on ground states (open) & 
          $W_\Gamma \ket{\text{GS}^{(0)}(g)} = \chi_\Gamma(g)\ket{\text{GS}^{(0)}(g)}$ & $ W_\Gamma \ket{\text{GS}^{(\alpha)}(g)} = \chi_\Gamma(g)\ket{\text{GS}^{(\alpha)}(g)}$ \\
          & & \\
          \end{tabular}
  \caption{The properties of the SSB Hamiltonians $H_{\text{SSB}}^{(0)},H_{\text{SSB}}^{(\alpha)}$. Ground state degeneracy and the symmetry actions on them are identical on both closed periodic and open chains. Meanwhile, the fusion rule of local order parameters distinguishes two SSB phases.}
\label{tab:SSBHamiltonians}
\end{table}

\subsubsection{Fusion rule of local order parameters}

The two Hamiltonians $H_{\text{SSB}}^{(0)}$ and $H_{\text{SSB}}^{(\alpha)}$
also support the same number of local order parameters.
The local order parameters of $H_{\text{SSB}}^{(0)}$ are onsite operators labeled by conjugacy classes $C\in C(G)$,
\begin{align}
    O^{(0)}_C = \sum_{g\in C} \overrightarrow{X}_g~,
    \label{local order parameter 0}
\end{align}
In contrast, the local order parameters of $H_{\text{SSB}}^{(\alpha)}$ {are} given by
\begin{align}
    O^{(\alpha)}_C = \sum_{g\in C} (Z_b)^{n_a(g)}\overrightarrow{X}_g~.
    \label{local order parameter alpha}
\end{align}
{In both cases, the operators commute with the respective Hamiltonians.}

The two SSB phases can be distinguished by the fusion rules of local order parameters under operator multiplication:
\begin{align}
    O^{(0)}_C\times O^{(0)}_{C'} = \sum_{C''\in C(G)} N^{C''}_{C,C'} O^{(0)}_{C''}~, 
    \label{fusion of local order parameters 0}
\end{align}
\begin{align}
    O^{(\alpha)}_C\times O^{(\alpha)}_{C'} = \sum_{C''\in C(G)} N^{C''}_{C,C'} \alpha'(C,C') O^{(\alpha)}_{C''}~, 
    \label{fusion of local order parameters alpha}
\end{align}
where $\alpha'(C,C')=\alpha'(g,h)=(-1)^{n_b(g)n_a(h)}$ for $C=[g],C'=[h]$, and $N^{C''}_{C,C'}$ is the fusion coefficient for multiplication of conjugacy classes~\cite{James_Liebeck_2001}. Since $[\alpha']=[\alpha]$ represents a nontrivial cohomology class in $H^2(G,U(1))$, the phase factor $\alpha'(C,C')$ in the fusion rule cannot be eliminated by shifting $O^{(\alpha)}(C)$ by phases. 

\subsubsection{Comment on boundaries}

Let us introduce boundaries to the Hamiltonians $H^{(0)}_{\text{SSB}}$ and $H^{(\alpha)}_{\text{SSB}}$. It is straightforward to  show that the boundary Hamiltonians have exactly the same ground state Hilbert space, with the same symmetry action; therefore, the boundaries cannot be used to distinguish the two phases.
The symmetric Hamiltonians with boundaries are given by
\begin{align}
    H^{(0,\text{open})}_{\text{SSB}} = -\sum_{1\le j\le L-1}\sum_{g\in G}\overleftarrow{X}_g^{(j)} \overrightarrow{X}_g^{(j+1)}~,
\end{align}
\begin{align}
\begin{split}
    H^{(\alpha,\text{open})}_{\text{SSB}} &= -\sum_{1\le j\le L-1}\sum_{g\in G}\overleftarrow{X}_g^{(j)}  (Z_a^{(j)})^{n_b(g)}(Z_b^{(j+1)})^{n_a(g)} \overrightarrow{X}_g^{(j+1)}~,
\end{split}
\end{align}
where the term between sites $L$ and $1$ is simply omitted. Using the sequential circuits \eqref{eq:UCX}, \eqref{eq:UCZ}, one can see that
\begin{align}
    U_{CX}H^{(0,\text{open})}_{\text{SSB}}U_{CX}^\dagger = U_{CZ}U_{CX}H^{(0,\text{open})}_{\text{SSB}}U_{CX}^\dagger U_{CZ}^\dagger = - \sum_{j=1}^{L-1} \sum_{g\in G} \overleftarrow{X}^{(j)}_g~.
\end{align}
Thus, these two Hamiltonians share the same ground state degeneracy. By the sequential circuit, the Hamiltonian becomes onsite and lacks terms acting on site $L$, therefore the ground state degeneracy of both SSB Hamiltonians {is} $|G|=128$, and each ground state is labeled by an element $g_L\in G$ that specifies the state $\ket{g_L}$ at the site $L$. The symmetry action on them is also the same,
\begin{align}
U_{CX}W_\Gamma U_{CX}^\dagger = U_{CZ}U_{CX}W_\Gamma U_{CX}^\dagger U_{CZ}^\dagger = \sum_{g_L\in G} \text{tr}(\Gamma(g_L))\ket{g_L}\bra{g_L}~,
\end{align}
which acts solely on the site $L$. 
{Hence, the boundaries cannot be used to distinguish these maximally symmetry-broken phases.
This observation motivates us to study gapped \textit{interfaces} between SSB phases, as discussed below.}

\subsection{Interface Modes Between Maximally  $\text{Rep}(G)$ Broken Phases}

Let us describe an interface between two distinct SSB phases $H_{\text{SSB}}^{(0)}$ and $H_{\text{SSB}}^{(\alpha)}$. 
We consider a closed chain with length $2L$, where the region $1\le j\le L$ has a Hamiltonian $H_{\text{SSB}}^{(0)}$, while $L+1\le j \le 2L$ has a Hamiltonian $H_{\text{SSB}}^{(\alpha)}$.  
This setup introduces interfaces at the boundaries $(2L,1)$ and $(L,L+1)$.
The Hamiltonian is given by
\begin{align}
    \begin{split}
    H^{(0|\alpha)}_{\text{SSB}} &= -\sum_{j=1}^{L-1}\sum_{g\in G}\overleftarrow{X}_g^{(j)} \overrightarrow{X}_g^{(j+1)}- \sum_{j=L+1}^{2L-1}\sum_{g\in G}\overleftarrow{X}_g^{(j)}  (Z_a^{(j)})^{n_b(g)}(Z_b^{(j+1)})^{n_a(g)} \overrightarrow{X}_g^{(j+1)} \\
    &- \sum_{g\in G} i^{n_a(g)n_b(g)}\overleftarrow{X}_g^{(2L)}  (Z_a^{(2L)})^{n_b(g)} \overrightarrow{X}_g^{(1)} - \sum_{g\in G}i^{n_a(g)n_b(g)}\overleftarrow{X}_g^{(L)} (Z_b^{(L+1)})^{n_a(g)} \overrightarrow{X}_g^{(L+1)}
    \end{split}
    \label{eq:interface}
\end{align}
where the last two terms are at the interfaces. The imaginary factors at the interfaces $i^{n_a(g)n_b(g)}$ are included to ensure Hermiticity of the Hamiltonian.
This Hamiltonian has the $\text{Rep}(G)$ symmetry generated by $W_\Gamma$. See Fig.~\ref{fig:interface}.

The local Hamiltonians at different sites commute with each other, but now the Hamiltonians at the interfaces get frustrated within themselves. In particular, writing the local Hamiltonian (up to a phase factor) as $\rho(g):=\overleftarrow{X}_g^{(L)} (Z_b^{(L+1)})^{n_a(g)} \overrightarrow{X}_g^{(L+1)}$ at the interface $(L,L+1)$, $\rho(g)$ becomes a \textit{projective} representation under $G$ rather than linear; $\rho(g)\rho(h) = \rho(gh) \alpha(g,h)$, with $\alpha=(-1)^{n_a(g)n_b(h)}\in B(G)$. This implies that these local Hamiltonians at the interface are intrinsically frustrated,
and one cannot minimize all local Hamiltonians at the same time. This is in contrast to the frustration free Hamiltonians $H_{\text{SSB}}^{(0)}$ or $H_{\text{SSB}}^{(\alpha)}$ without interfaces.

This is reminiscent of edge states of the 1d cluster state with $\Z_2\times\Z_2$ symmetry; a set of local operators at the boundary commuting with the Hamiltonian form a projective representation of $\Z_2\times\Z_2$ symmetry, which give rise to edge modes at the boundary. Indeed, we will later see that the algebra of interface Hamiltonians leads to nontrivial interface modes between distinct SSB phases.

One can again use a sequential circuit to bring the Hamiltonian into onsite form. We use the following circuits:
\begin{align}
    U_{CX}^{(0)} =(C\overrightarrow{X})_{L-1, L}\dots (C\overrightarrow{X})_{2,3}(C\overrightarrow{X})_{1,2}
\end{align}
\begin{align}
    U_{CX}^{(\alpha)} =(C\overrightarrow{X})_{2L-1, 2L}\dots (C\overrightarrow{X})_{L+2,L+3}(C\overrightarrow{X})_{L+1,L+2}
\end{align}
\begin{align}
    U^{(\alpha)}_{CZ} = \prod_{j=L+1}^{2L-1}(CZ)^{a,b}_{j,j}(CZ)^{a,b}_{j,j+1}
\end{align}
Then, one can show that $H^{(0|\alpha)}_{\text{SSB}}$ is transformed into
\begin{align}
\begin{split}
\tilde H^{(0|\alpha)}_{\text{SSB}} :=&
U^{(\alpha)}_{CZ}U_{CX}^{(\alpha)}U_{CX}^{(0)} H^{(0|\alpha)}_{\text{SSB}}U_{CX}^{(0)^\dagger}U_{CX}^{(\alpha)\dagger}U^{(\alpha)\dagger}_{CZ}  \\
=& -\sum_{j=1}^{L-1} \sum_{g\in G} \overleftarrow{X}^{(j)}_g   - \sum_{j=L+1}^{2L-1} \sum_{g\in G} \overleftarrow{X}^{(j)}_g \\
-& \sum_{g\in G}i^{n_a(g)n_b(g)}\overleftarrow{X}_g^{(2L)}  (Z_a^{(2L)})^{n_b(g)} \prod_{j=1}^{L} \overrightarrow{X}^{(j)}_g - \sum_{g\in G}i^{n_a(g)n_b(g)}
(Z_b^{(L)})^{n_a(g)} \overrightarrow{X}_g^{(L)}  \prod_{j=L}^{2L-1} \overrightarrow{X}^{(j)}_g
\end{split}
\end{align}
where the first two terms are onsite Hamiltonians in the bulk. The last two terms are nonlocal, which are obtained by the transformation of interface Hamiltonians.  See Fig.~\ref{fig:interface}.
This again allows us to write a ground state in the form of a product state,
\begin{align}
    \ket{\widetilde{\text{GS}}} = \bigotimes_{j=1}^{L-1} \left(\sum_{g_j\in G}\ket{g_j}\right) \bigotimes_{j=L+1}^{2L-1} \left(\sum_{g_j\in G}\ket{g_j}\right)\otimes 
    \ket{\widetilde{\text{GS}}}_{L,2L}~,
\end{align}
where $\ket{\widetilde{\text{GS}}}_{L,2L}$ is a two-body state supported on sites $L,2L$, and minimizes the two-body Hamiltonian
\begin{align}
    H'_{L,2L}= -\sum_{g\in G}i^{n_a(g)n_b(g)}\overrightarrow{X}_g^{(L)} (\overleftarrow{X}_g Z_a^{n_b(g)})^{(2L)} - \sum_{g\in G}i^{n_a(g)n_b(g)}\overleftarrow{X}_g^{(L)} (Z_b^{n_a(g)}\overrightarrow{X}_g )^{(2L)}
\end{align}
These two terms in $H'_{L,2L}$ encode the action of interface Hamiltonians on the ground states.
The ground state degeneracy is given by that of the above two-body Hamiltonian. 
It is convenient to further transform the above $H'_{L,2L}$ by the operator $U_{L,2L}=(CZ_{a,b})(C\overrightarrow{X})$ acting on the interfaces $L,2L$,
\begin{align}
\begin{split}
     H_{L,2L} &= U_{L,2L} H'_{L,2L} U^\dagger_{L,2L} \\
     &= -\sum_{g\in G}i^{n_a(g)n_b(g)} (\overrightarrow{X}_g Z_a^{n_b(g)})^{(L)} (\overleftarrow{X}_g \overrightarrow{X}_g Z_a^{n_b(g)}Z_b^{n_a(g)})^{(2L)} - \sum_{g\in G}i^{n_a(g)n_b(g)}(Z_b^{n_a(g)}\overleftarrow{X}_g )^{(L)}
    \end{split}
 \end{align}

To solve this Hamiltonian, we first focus on the site $2L$. The term $\rho(g) = (\overleftarrow{X}_g \overrightarrow{X}_g Z_a^{n_b(g)}Z_b^{n_a(g)})^{(2L)}$ forms a linear representation of $G$, and one can check that it is commutative; $\rho(g)\rho(h)=\rho(h)\rho(g)$. Indeed,
\begin{align}
\begin{split}
    \rho(g)\rho(h)\ket{k} &=(-1)^{n_b(g)n_a(k)+n_a(g)n_b(k)}
    (-1)^{n_b(h)n_a(k)+n_a(h)n_b(k)}\ket{ghkh^{-1}g^{-1}} \\
    &= (-1)^{n_b(gh)n_a(k)+n_a(gh)n_b(k)}\ket{ghkh^{-1}g^{-1}} \\
    &= (-1)^{n_b(hg)n_a(k)+n_a(hg)n_b(k)}\ket{hgkg^{-1}h^{-1}} \\
    &= \rho(h)\rho(g)\ket{k}
    \end{split}
\end{align}
where we used that $gh=hg x$ with some $x\in Z(G)$ (since $G$ is a central extension), and hence $ghkh^{-1}g^{-1}=hgkg^{-1}h^{-1}$.
Therefore $\rho$ decomposes into 1d representations $\{\pi\}$, and each 1d representation $\pi$ corresponds to an eigenstate $\ket{\Psi_\pi}$ of the Hamiltonian at $2L$. 

Let us fix one eigenstate $\ket{\Psi_\pi}$ at the site $2L$. The above {Hamiltonian $H_{L,2L}$ then reduces to a single-site Hamiltonian on site $L$,}
\begin{align}
\begin{split}
     H^{(\pi)}_{L,2L} &= -\sum_{g\in G}i^{n_a(g)n_b(g)}\pi(g) (\overrightarrow{X}_g Z_a^{n_b(g)})^{(L)} - \sum_{g\in G}i^{n_a(g)n_b(g)}(Z_b^{n_a(g)}\overleftarrow{X}_g )^{(L)}~,
    \end{split}
 \end{align}
which is labeled by a 1d $G$ representation $\pi$.
The ground state of $H_{L,2L}$ is then obtained by the Hamiltonian $H^{(\pi)}_{L,2L}$ with $\pi$ that yields the lowest energy.
By evaluating the above Hamiltonian on a computer, one can see that the ground state of the Hamiltonian $H^{(\pi)}_{L,2L}$ gives the smallest energy with the following four choices of the representation $\pi$ among 16 of them,
\begin{align}
    \pi(g) = 1, \quad \pi(g)= (-1)^{n_a(g)}, \quad \pi(g)=(-1)^{n_b(g)}, \quad \pi(g) = (-1)^{n_a(g)n_b(g)}~.
\end{align}
In these cases, the ground state of $H^{(\pi)}_{L,2L}$ is unique.

Therefore, the ground state Hilbert space has the form of
\begin{align}
    \mathcal{H} = \mathcal{H}_1 \oplus \mathcal{H}_{n_a} \oplus \mathcal{H}_{n_b} \oplus \mathcal{H}_{n_an_b}~,
\end{align}
where each $\mathcal{H}_\pi$ is the ground state Hilbert space that corresponds to the 1d representation $\pi$.
Accordingly, the ground state degeneracy $d$ of $H_{L,2L}$ is obtained by
\begin{align}
    d = d_1 + d_{n_a} + d_{n_b} + d_{n_an_b}~,
\end{align}
where $d_\pi$ is the number of 1d representations $\pi$ contained in $\rho(g)=(\overleftarrow{X}_g \overrightarrow{X}_g Z_a^{n_b(g)}Z_b^{n_a(g)})^{(2L)}$ upon irreducible decomposition, which counts the basis states $\ket{\Psi_\pi}$. This can be evaluated by using the orthogonality of characters. The character is computed in \eqref{eq:character_rho} as $\chi_\rho(g) = |Z(g)|$, then each $d_\pi$ is given by
\begin{align}
\begin{split}
    d_1 &= \frac{1}{|G|} \sum_{g\in G}|Z(g)|  = \sum_{g\in G}\frac{1}{|[g]|} = \sum_{C\in C(G)}1= |C(G)|=32~, \\
    d_{n_a} &= \frac{1}{|G|} \sum_{g\in G}(-1)^{n_a(g)}|Z(g)|
    =\sum_{g\in G}\frac{(-1)^{n_a(g)}}{|[g]|} = \sum_{C\in C(G)}(-1)^{n_a(C)} = 16~, \\
    d_{n_b} &=  \sum_{C\in C(G)}(-1)^{n_b(C)} = 4~, \\
    d_{n_an_b} &=  \sum_{C\in C(G)}(-1)^{n_a(C)}(-1)^{n_b(C)} = 4~, \\
    \end{split}
    \label{eq:computation of dpi}
\end{align}
where one can find the detailed descriptions of the conjugacy classes used for computations in Appendix \ref{app:conjugacyclasses}.
Therefore the ground state degeneracy of the interface Hamiltonian is given by
\begin{align}
    d=32+16+4+4=56>32~,
    \label{eq:interface_degeneracy}
\end{align}
demonstrating that the interface supports nontrivial modes.

Let us examine how the symmetry acts on the ground state Hilbert space with interfaces. Unlike the edge modes of SPT phases (or interfaces of SPT phases studied in e.g., Refs.~\onlinecite{seifnashri2024cluster, inamura202411dsptphasesfusion}), the symmetry action does not localize at the interfaces of SSB phases. Since the symmetry is spontaneously broken in the bulk, the bulk symmetry action is also nontrivial. Meanwhile, after performing a sequential circuit one can make the symmetry action localized at the interface.

By conjugating the symmetry operators $W_\Gamma$ by sequential circuits, we get
\begin{align}
    U_{L,2L}U^{(\alpha)}_{CZ}U_{CX}^{(\alpha)}U_{CX}^{(0)} W_\Gamma U_{CX}^{(0)^\dagger}U_{CX}^{(\alpha)\dagger}U^{(\alpha)\dagger}_{CZ} U^\dagger_{L,2L} = \chi_\Gamma(g_{2L})\ket{g_{2L}}\bra{g_{2L}}~.
\end{align}
Thus, by the action of the sequential circuit, the symmetry action becomes local at the site $2L$ and solely depends on the choice of $\ket{\Psi_\pi}$. Therefore the symmetry acts within each Hilbert space $\mathcal{H}_1, \mathcal{H}_{n_a}, \mathcal{H}_{n_b}, \mathcal{H}_{n_an_b}$.
Among the 56 ground states, if we focus on the $d_1=|C^{(\alpha)}(G)|=32$ ground states of $\mathcal{H}_1$, each state $\ket{\Psi_\pi(C)}$ is labeled by a conjugacy class $C\in C(G)$ and has the form of $\eqref{eq:state labeled by C}$.
The symmetry acts within this 32 dimensional Hilbert space in the same way as the case without interfaces, 
\begin{align}
    W_\Gamma \ket{\Psi_\pi(C)} = \chi_\Gamma(C)\ket{\Psi_\pi(C)}~.
    \label{eq:symaction_interface}
\end{align}
Since the conjugacy class $[h]$ of the state $\ket{h}$ at the site $2L$ is preserved under the actions of Hamiltonians $\rho(g)$, the states of $\mathcal{H}_{n_a}, \mathcal{H}_{n_b}, \mathcal{H}_{n_an_b}$ are also labeled by an element $C\in C(G)$ (although only a subset of $C(G)$ corresponds to a state of $\mathcal{H}_\pi$ with $\pi\neq 1$).
The symmetry actions on the sectors $\mathcal{H}_{n_a}, \mathcal{H}_{n_b}, \mathcal{H}_{n_an_b}$ are also diagonal in the basis $\ket{\Psi_\pi(C)}$, and take the form of \eqref{eq:symaction_interface}. 

\begin{figure}[t]
    \centering
    \includegraphics[width=0.9\textwidth]{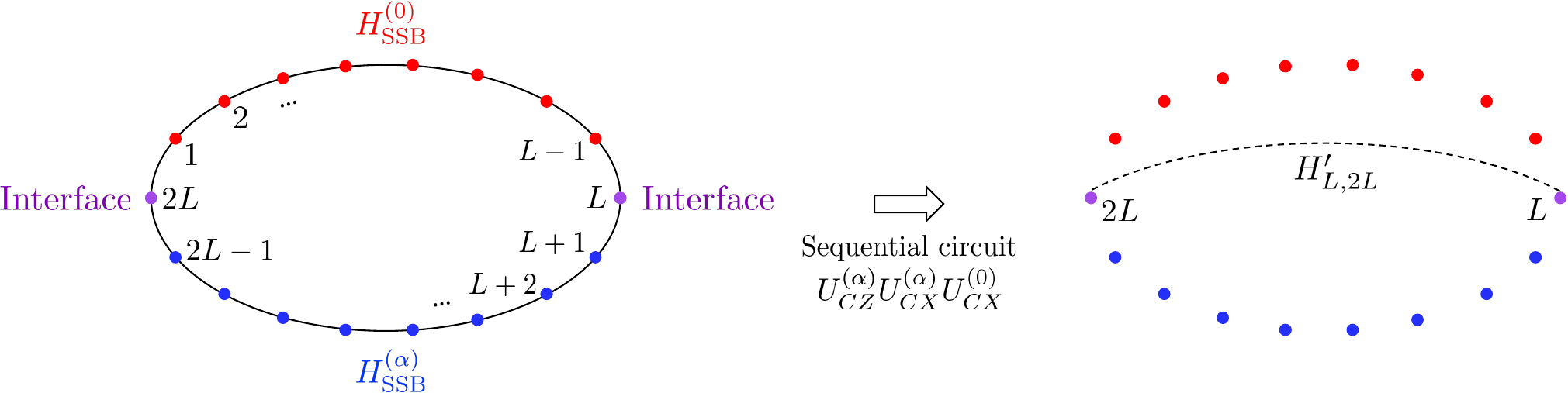}
    \caption{Left: The Hamiltonian with interfaces located sites $L$ and $2L$. Right: By a sequential circuit, the bulk becomes decoupled, leaving behind the two body Hamiltonian at the interfaces $L$ and $2L$. The symmetry action becomes localized at the interfaces after the action of the sequential circuit.}
    \label{fig:interface}
\end{figure}

\subsection{Soft Symmetry of (2+1)D Topological Order and Symmetry TQFT}
\label{sec:soft}

\subsubsection{Review: Soft symmetry of (2+1)D topological order}

In this section, we discuss the role of the Bogomolov multiplier in (2+1)D topological order. 
An element of group cohomology $\alpha\in H^2(G,U(1))$ generally corresponds to an ordinary (0-form) global symmetry of (2+1)D $G$ gauge theory, dubbed a \textit{gauged SPT} symmetry~\cite{Barkeshli2023codimension, Barkeshli:2022edm, Barkeshli:2023bta, Hsin:2019fhf, Yoshida2016generalized}. This can be understood as follows. 
Let us consider a (2+1)D trivial gapped phase with $G$ symmetry. Let us take a 1d submanifold of this 2d trivial phase, and place a (1+1)D $G$ SPT phase specified by $\alpha\in H^2(G,U(1))$ along this 1d locus. This can be done by acting the SPT entangler supported on the 1d submanifold. We then gauge the $G$ symmetry of the whole 2d system. After gauging the symmetry, the system becomes a (2+1)D topological order given by the $G$ gauge theory, and the decoration of (1+1)D SPT phase becomes an insertion of a symmetry defect in the $G$ gauge theory. This construction explains how the element of $H^2(G,U(1))$ defines a symmetry of $G$ gauge theory in (2+1)D. See Fig.~\ref{fig:gaugedSPT}.

The anyons in the $G$ gauge theory in (2+1)D are labeled by a pair $([g],\pi)$, where $[g]\in C(G)$ and $\pi\in \text{Rep}(Z(g))$ is an irreducible (linear) representation of $Z(g)$.
The first label corresponds to the magnetic flux, and the second corresponds to the electric charge of $Z(g)$ attached to it. The gauged SPT symmetry acts on the labels of anyons by
\begin{align}
    ([g],\pi)\to ([g],\pi\times \varphi_g)
    \label{eq:action of gauged SPT}
\end{align}
with $\varphi_g(h) = \alpha(g,h)/\alpha(h,g)$.
That is, the symmetry shifts the electric charge attached to the magnetic flux.

An example of such a gauged SPT symmetry is found in $\Z_2\times \Z_2$ gauge theory in (2+1)D, which is two copies of $\Z_2$ toric codes. Let us consider the nontrivial element $\alpha\in H^2(\Z_2\times\Z_2,U(1))$. The corresponding gauged SPT operator permutes the anyon as
\begin{align}
    m_1\to m_1e_2, \quad m_2\to m_2e_1, \quad e_1\to e_1, \quad e_2\to e_2~.
\end{align}

Let us now consider an element $\alpha\in B(G)$ of the Bogomolov multiplier. This leads to a global symmetry of (2+1)D $G$ gauge theory with the following exotic properties:
\begin{itemize}
    \item The symmetry does not permute labels of anyons~\cite{Davydov2014}. This is because every $g\in G$ is $\alpha$-regular, so $\varphi_g$ becomes trivial for any $g\in G$, meaning that the anyon permutation in \eqref{eq:action of gauged SPT} becomes trivial.    
    Therefore, the symmetry acts trivially on the ground state Hilbert space on a 2d torus.
    \item The symmetry fractionalization is also trivial, i.e., the anyons do not carry fractional charges under the symmetry.
    \item Nevertheless, the symmetry is still faithful; it acts on a ground state Hilbert space of higher genus surface by a nontrivial operator. While the symmetry do not permute the labels of anyons, the symmetry acts on the fusion vertices of the anyons, where a pair of anyons fuses into a single one at the junction.
\end{itemize}

The global symmetry of (2+1)D topological order with the above properties is referred to as a \textit{soft symmetry}~\cite{kobayashi2025soft}.

\begin{figure}[t]
    \centering
    \includegraphics[width=0.9\textwidth]{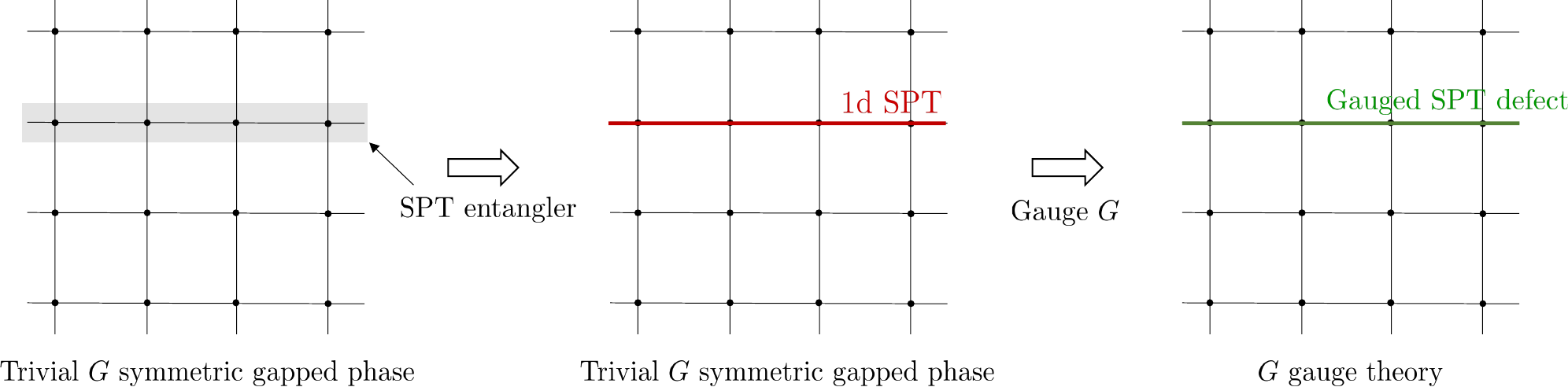}
    \caption{The construction of gauged SPT defect in (2+1)D $G$ gauge theory. We first start with a trivial gapped phase with $G$ symmetry (product state), and act the 1d SPT entangler along a 1d subsystem. This SPT entangler is labeled by group cohomology $\alpha\in H^2(G,U(1))$, and creates a 1d SPT phase at the subsystem. Then we gauge $G$ symmetry of the whole system to get $G$ gauge theory in (2+1)D. The decoration of 1d SPT phase results in the insertion of a symmetry defect in the $G$ gauge theory, which we call a gauged SPT defect.}
    \label{fig:gaugedSPT}
\end{figure}

\subsubsection{Review: Gapped boundary with identical condensed particles}

Let us review one physical consequence of soft symmetries. The gapped boundary condition of (2+1)D (untwisted) $G$ gauge theory is characterized by a pair $(K, \omega)$, where $K\subset G$ is a subgroup of $G$ and $\omega\in H^2(K, U(1))$~\cite{Beigi:2010htr}. Physically, $K\subset G$ specifies the unbroken gauge group at the boundary, and $\omega$ corresponds to a topological response localized at the boundary.

Now, suppose that the Bogomolov multiplier $B(G)$ is nontrivial, and let $\alpha\in B(G)$.
Let us consider a pair of gapped boundaries characterized by $(G,0)$ and $(G,\alpha)$. 
The boundary 
$(G,0)$ corresponds to the condition where magnetic fluxes are condensed. This corresponds to a Neumann boundary condition for the $G$ gauge field.

The boundary $(G,\alpha)$ corresponds to having the gauged SPT symmetry defect $\mathcal{D}_\alpha$ placed parallel to the gapped boundary $(G,0)$, and then ``pushing'' the gauged SPT defect onto the boundary. 
Since the gauged SPT defect $\mathcal{D}_\alpha$ corresponds to a soft symmetry, it does not change the label of the condensed anyons. Therefore, the new gapped boundary condition $(G,\alpha)$ again condenses the magnetic fluxes; the set of condensed particles for a pair of gapped boundaries $(G,0)$ and $(G,\alpha)$ are identical.

\subsubsection{Classification of (1+1)D Maximally Broken 
$\text{Rep}(G)$ Phases by the Bogomolov Multiplier}

In Sec.~\ref{sec:SSB} we discussed an example of a $\text{Rep}(G)$ SSB phase in (1+1)D obtained by gauging (1+1)D SPT phase characterized by Bogomolov multiplier.

To understand the general relation between maximally symmetry broken phases and Bogomolov multiplier, it is instructive to describe this $\text{Rep}(G)$ broken phase within the framework of symmetry TQFT. That is, generic (1+1)D phase with finite non-invertible symmetry can be described by a thin interval of (2+1)D TQFT sandwiched by a pair of boundary conditions. One boundary is gapped and called a symmetry boundary condition, where the topological operators at the boundary are regarded as a global symmetry of the system. The other boundary is called a dynamical boundary condition that encodes the choice of the (1+1)D theory. See Fig.~\ref{fig:symtft} (a).
In the case of $\text{Rep}(G)$ symmetry, the bulk TQFT is the (untwisted) $G$ gauge theory in (2+1)D. The symmetry boundary condition is given by the one condensing magnetic fluxes, i.e., the Neumann boundary condition for the $G$ gauge field. The ground state degeneracy of the (1+1)D gapped phase is given by counting an overlap of condensed particles in both gapped boundaries.
In the maximally broken phase, the dynamical boundary condition also condenses the magnetic particles, so that the ground state degeneracy is maximized. 
Such dynamical boundary conditions are labeled by a pair $(G,\alpha)$, with $\alpha$ an element of Bogomolov multiplier. See Fig.~\ref{fig:symtft} (b).

Therefore, we conclude that the (1+1)D gapped phase with maximally broken $\text{Rep}(G)$ symmetry is classified by the Bogomolov multiplier $B(G)$. 
Such phases are obtained by gauging the $G$ symmetry of SPT phases characterized by $\alpha\in B(G)$.

\begin{figure}[t]
    \centering
    \includegraphics[width=0.8\textwidth]{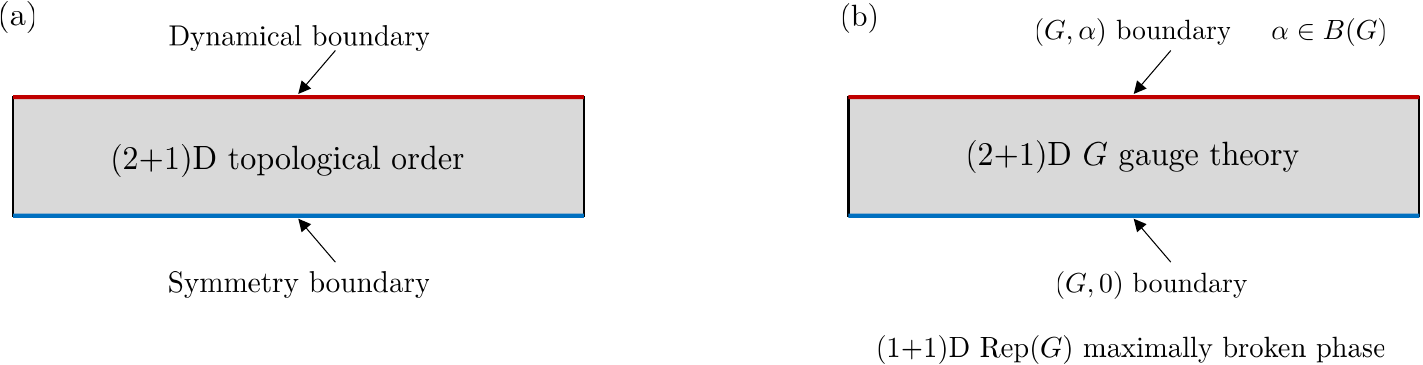}
    \caption{(a): A generic (1+1)D system with a non-invertible symmetry is obtained by an interval of (2+1)D topological order sandwiched by a pair of boundary conditions. The symmetry boundary condition is gapped, and the topological operators at the symmetry boundary define the symmetries of the interval and the resulting (1+1)D system. (b): Symmetry TQFT description for a (1+1)D gapped phase with maximally broken $\text{Rep}(G)$ symmetry. }
    \label{fig:symtft}
\end{figure}

\subsubsection{Interface modes between SSB phases from symmetry TQFT}

In Sec.~\ref{sec:SSB}, we discussed the interface mode between two distinct  $\text{Rep}(G)$ SSB phases in (1+1)D.

The ground state degeneracy of the interface Hamiltonian \eqref{eq:interface} can be derived within the framework of  the symmetry TQFT at the continuum level. 
In symmetry TQFT, the interface between distinct SSB phases is understood as putting an interval of a gauged SPT defect for $\alpha(g,h)=(-1)^{n_a(g)n_b(h)}\in B(G)$ 
{placed near the Neumann boundary of} $G$ gauge theory, see Fig.~\ref{fig:interface_symtft}. This is a condensation defect obtained by 1-gauging the $\Z_2\times \Z_2$ Wilson lines for 1d representations $W_{n_a}, W_{n_b}$ in the bulk~\cite{Roumpedakis:2022aik}, where $W_\pi$ is an Wilson line carrying the $G$ representation $\pi$.\footnote{This 1-gauging corresponds to the twisted 1-gauging in the presence of the nontrivial discrete torsion $\eta\in H^2(\Z_2\times\Z_2,U(1))$ of $\Z_2\times\Z_2$ symmetry~\cite{Barkeshli2023codimension}. }
Therefore by shrinking a size of the interval, this gauged SPT defect becomes a non-simple line operator $1 + W_{n_a} + W_{n_b} + W_{n_an_b}$ fused at the Neumann boundary (see the right of Fig.~\ref{fig:interface_symtft}). 
The corresponding ground state degeneracy is computed via the overlap of boundary states
\begin{align}
    d = \langle \mathcal{L}| \mathcal{L}\times (1+W_{n_a}+W_{n_b}+W_{n_an_b}) \rangle~,
\end{align}
where $\mathcal{L}$ is the Lagrangian algebra anyon for the Neumann boundary condition that characterizes the boundary state on a torus~\cite{Kaidi2022higher},
\begin{align}
    \mathcal{L} = \bigoplus_{C\in C(G)} (C,1)~,
\end{align}
where $(C,1)$ denotes a magnetic flux in (2+1)D $G$ gauge theory carrying $C\in C(G)$.
By using the invariance of $\ket{\mathcal{L}}$ under modular $S$ transformations~\cite{Lan2015gapped, Kaidi2022higher},
\begin{align}
    d =  \langle \mathcal{L}| M_{1} + M_{n_a} + M_{n_b} + M_{n_an_b}| \mathcal{L} \rangle~,
\end{align}
where $M_{\pi}$ is an operator that links the Wilson line for 1d representation $\pi$ to the state.

We then obtain
\begin{align}
\begin{split}
    \langle \mathcal{L}|M_1|\mathcal{L}\rangle &= \sum_{C\in C(G)} 1 = d_1~, \\
    \langle \mathcal{L}|M_{n_a}|\mathcal{L}\rangle &= \sum_{C\in C(G)} (-1)^{n_a(C)} = d_{n_a}~, \\
    \langle \mathcal{L}|M_{n_b}|\mathcal{L}\rangle &= \sum_{C\in C(G)} (-1)^{n_b(C)} = d_{n_b}~, \\
    \langle \mathcal{L}|M_{n_an_b}|\mathcal{L}\rangle &= \sum_{C\in C(G)} (-1)^{n_a(C)}(-1)^{n_b(C)} = d_{n_an_b}~, \\
    \end{split}
\end{align}
where $d_\pi$ are integers described in \eqref{eq:computation of dpi}.
This matches the ground state degeneracy  $d= d_1 + d_{n_a} + d_{n_b} + d_{n_an_b}=56$ in Eq.~\eqref{eq:interface_degeneracy}.

\begin{figure}[t]
    \centering
    \includegraphics[width=0.9\textwidth]{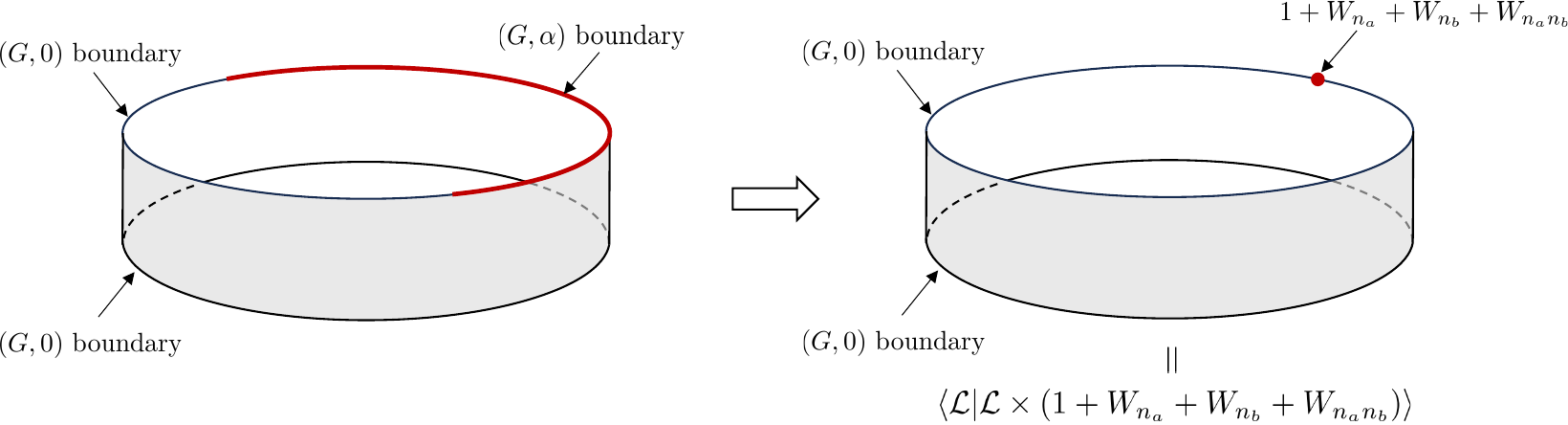}
    \caption{Symmetry TQFT picture for the interfaces between two distinct $(1+1)$D gapped phases with maximally broken $\text{Rep}(G)$ symmetry. On the dynamical (top) boundary, we have an interval of $(G,\alpha)$ boundary condition of (2+1)D $G$ gauge theory with $\alpha\in B(G)$. This is regarded as having a gauged SPT defect $\alpha$ along the interval of the top boundary (red curve). One can shrink the size of the interval for the gauged SPT defect, then we end up with a non-simple anyon $1+W_{n_a}+W_{n_b}+W_{n_an_b}$. The partition function is then expressed as the overlap of boundary states $\langle \mathcal{L}| \mathcal{L}\times (1+W_{n_a}+W_{n_b}+W_{n_an_b}) \rangle$.}
    \label{fig:interface_symtft}
\end{figure}

\subsubsection{Fusion rule of local order parameters}
In this symmetry TQFT picture, a local order parameter is a short topological line operator expanded along the thin interval of the strip, which becomes a local operator by shrinking the width of the interval. This topological operator has to end (i.e., condense) at a pair of gapped boundaries in Fig.~\ref{fig:symtft} (a) or (b), therefore given by lines of magnetic fluxes labeled by $([g],1)$ with $[g]\in C(G)$ for both SSB phases. 
These operators correspond to the local order parameters in the lattice model $O_C^{(0)}$ or $O_C^{(\alpha)}$ discussed in 
\eqref{local order parameter 0}, \eqref{local order parameter alpha}.

Then, the fusion rule of local order parameters on each SSB phase corresponds to the fusion of these short line operators ending at the interval. In our setup, the fusion of magnetic fluxes $[g],[h]$ into $[gh]$ with $g,h\in G$ acquires a phase $\alpha(g,h)$ at the gapped boundary $(G,\alpha)$. We note that in this example, such phase factor corresponds to a multiplication morphism of Lagrangian algebra at the gapped boundary $(G,\alpha)$; the structure of the Lagrangian algebra for $(G,\alpha)$ boundary differs from that of $(G,0)$ by the multiplication morphism, while the condensed anyons are identical \cite{kobayashi2025soft}. This leads to the distinct fusion rules of the local order parameters \eqref{fusion of local order parameters 0}, \eqref{fusion of local order parameters alpha}.


\section*{Acknowledgments}

R.K. thanks Maissam Barkeshli for related collaborations. H.W. thanks discussions with Takuma Saito.
R.K. and H.W. thank Okinawa Institute for Science and Technology for hosting
the program “Generalized Symmetries in Quantum Matter” in 2025, during which part of the work
is completed.

\section*{Funding Declaration}

R.K. is supported by U.S. Department of Energy through grant number DE-SC0009988 and the Sivian Fund. H.W. is supported by JSPS KAKENHI Grant No.~JP24K00541.

\bibliographystyle{utphys}
\bibliography{biblio}

\appendix
\section{Higgs' Example}

Here we provide another example of a finite group with nontrivial Bogomolov multiplier.
Let $p$ be a prime number. Consider the group $H$ defined by four generators $x_i$ $(i=1,2,3,4)$ with the relations:
\begin{align}
x_i^p = 1,\quad [[G,G],G] = 1.
\end{align}
Due to the second condition, all commutators $c_{ij}\coloneqq [x_i,x_j]=x_ix_jx_i^{-1}x_j^{-1}$ lie in the center $Z(H)$ of $H$. Any element $h\in H$ can be uniquely expressed as
\begin{align}
h=x_1^{n_1}x_2^{n_2}x_3^{n_3}x_4^{n_4}\prod_{1\leq j<k\leq 4}c_{jk}^{m_{jk}},
\end{align}
where $n_i, m_{jk}=0,1,\dots,p-1$ $(1\leq i\leq 4, 1\leq j<k\leq 4)$. Thus, the group order is $|H|=p^{10}$.

Consider the product of elements:
\begin{align}
h&=x_1^{n_1}x_2^{n_2}x_3^{n_3}x_4^{n_4}\prod_{1\leq j<k\leq 4}c_{jk}^{m_{jk}},\\
h'&=x_1^{n_1'}x_2^{n_2'}x_3^{n_3'}x_4^{n_4'}\prod_{1\leq j<k\leq 4}c_{jk}^{m_{jk}'}.
\end{align}
By definition of the commutator,
\begin{align}
x_j^{n_j}x_i^{n_i'}=c_{ij}^{-n_i'n_j}x_i^{n_i'}x_j^{n_j},
\end{align}
and thus
\begin{align}
hh'&=x_1^{n_1+n_1'}x_2^{n_2+n_2'}x_3^{n_3+n_3'}x_4^{n_4+n_4'}\prod_{1\leq j<k\leq 4}c_{jk}^{m_{jk}+m_{jk}'-n_j'n_k}.
\end{align}

Define $A$ as the subgroup of $Z(H)$ generated by $s\coloneqq c_{12}c_{34}$, having order $|A|=p$. Furthermore, no non-identity element $s^1,\dots,s^{p-1}$ in $A$ can be expressed as a single commutator. This $H$ is the smallest group possessing such a subgroup \cite{Higgs1989projective}.

Define the quotient group
\begin{align}
G=H/A,
\end{align}
with order $|G|=p^{9}$. Any element $g\in G$ can be written (excluding $c_{34}$) as
\begin{align}
g=x_1^{n_1}x_2^{n_2}x_3^{n_3}x_4^{n_4}\prod_{1\leq j<k\leq 4, j\leq 2}c_{jk}^{m_{jk}}.
\end{align}

For the product of $g,g'\in G$, even when $m_{34}=m_{34}'=0$, the term $c_{34}^{-n_3'n_4}$ appears. To absorb this into the product of $c_{12}$ and $s$, we define the product in $G$ as:
\begin{align}
gg'=x_1^{n_1+n_1'}x_2^{n_2+n_2'}x_3^{n_3+n_3'}x_4^{n_4+n_4'}\prod_{1\leq j<k\leq 4}c_{jk}^{m_{jk}+m_{jk}'-n_j'n_k+n_3'n_4\delta_{j1}\delta_{k2}}.
\end{align}

\subsubsection{Multiplier system}
Define a multiplier system $\alpha\in\mathbb{Z}^2(G,A)$ by
\begin{align}
\alpha(g,g')\coloneqq(e^{\frac{2\pi i}{p}})^{-n_3'n_4}.
\end{align}
This satisfies the cocycle condition:
\begin{align}
\alpha(g,g')\alpha(gg',g'')=\alpha(g,g'g'')\alpha(g',g''),
\end{align}
and all elements are $\alpha$-regular.

\subsubsection{Projective representations}
There are $p^6+p^4$ representations of dimension $p$ and $p^5-p^4-p^2$ representations of dimension $p^2$, thus:
\begin{align}
N_{\mathrm{rep}}^{(\alpha)}&=(p^6+p^4)+(p^5-p^4-p^2)=|C^{(\alpha)}(G)|.
\end{align}
The relation
\begin{align}
\sum_{A=1}^{N_{\mathrm{rep}}^{(\alpha)}}d_A^2=p^9=|G|
\end{align}
holds.

\subsubsection{Linear representations}
There are $p^4$ one-dimensional representations, $(p^4-1)p^2$ $p$-dimensional representations, and $p^5-p^4$ $p^2$-dimensional representations, thus:
\begin{align}
N_{\mathrm{rep}}^{(0)}=p^6+p^5-p^2=|C^{(\alpha)}(G)|,
\end{align}
and
\begin{align}
\sum_{A=1}^{N_{\mathrm{rep}}^{(\alpha)}}(d_A^{(0)})^2=p^9=|G|.
\end{align}

\section{Conjugacy classes of Pollmann-Turner group}
\label{app:conjugacyclasses}
Let us consider the group presented in Sec.~\ref{sec:pollmann-turner}.
There are 32 distinct conjugacy classes in the group:
\begin{align}
\begin{split}
    &[1],[a],[b],[c_1],[c_2],[x],[y_1],[y_2], [ab], [ac_1], [ac_2], [bc_1], [bc_2], [by_1], [by_2], [c_1c_2], [c_1y_2], [c_2y_1],[xy_1],[xy_2], \\
    & [y_1y_2],[abc_1],[abc_2],[ac_1c_2], [bc_1c_2],[bc_1y_2],[bc_2y_1], [by_1y_2],[c_2c_2y_1], [xy_1y_2],[abc_1c_2],[bc_1c_2y_1]~.
    \end{split}
\end{align}
Among these 32 conjugacy classes, there are 8 elements with $n_a(C)= 1$,
\begin{align}
    n_a(C)=1: \quad C=[a],[ab], [ac_1], [ac_2],[abc_1],[abc_2],[ac_1c_2],[abc_1c_2]~.
\end{align}
There are 14 elements with $n_b(C)=1$,
\begin{align}
\begin{split}
    &n_b(C)=1: \ [b],[ab], [bc_1], [bc_2], [by_1], [by_2], [abc_1],[abc_2],[bc_1c_2],[bc_1y_2],[bc_2y_1], [by_1y_2],[abc_1c_2],[bc_1c_2y_1]~.
    \end{split}
\end{align}
There are 14 elements with $n_a(C)+n_b(C)=1$ mod 2,
\begin{align}
&n_a(C)+n_b(C)=1: \ [a],[ac_1], [ac_2],[ac_1c_2],[b], [bc_1], [bc_2], [by_1], [by_2],[bc_1c_2],[bc_1y_2],[bc_2y_1], [by_1y_2],[bc_1c_2y_1]~.
\end{align}
Using these counting one can perform \eqref{eq:computation of dpi}.

\end{document}